\documentclass[
reprint,
 amsmath,amssymb,
 aps,
floatfix
]{revtex4-2}
\usepackage{mathtools}
\usepackage{graphicx}%
\usepackage{dcolumn}%
\usepackage{bm}%
\usepackage{braket}
\usepackage{tikz}
\usepackage{multirow}
\usepackage{enumitem}
\usepackage{quantikz}

\usepackage{amsmath, amssymb}
\usepackage{algpseudocode}
\usepackage[linesnumbered,ruled,vlined,procnumbered]{algorithm2e}
\usepackage{setspace}
\usepackage[version=4]{mhchem}

\def \eqand {\text{ and }}

\newtheorem{theorem}{Theorem}
\newtheorem{lemma}{Lemma}

\begin{document}

\preprint{APS/123-QED}

\title{Geo-ADAPT-VQE: Quantum Information Metric-Aware Circuit Optimization for Quantum Chemistry }%

\author{Mohammad Aamir Sohail}
\email{mdaamir@umich.edu}
\altaffiliation{This research was conducted at MERL.}
\affiliation{%
Electrical and Computer Engineering, University of Michigan, Ann Arbor, USA}
\author{Toshiaki Koike-Akino}
\email{koike@merl.com}
\affiliation{%
Mitsubishi Electric Research Laboratories (MERL), Cambridge, MA 02139, USA}

\begin{abstract}
Adaptive ansatz construction has emerged as a powerful technique for reducing circuit depth and improving optimization efficiency in variational quantum eigensolvers. However, existing adaptive methods, including ADAPT-VQE, rely solely on first-order gradients and therefore ignore the underlying geometry of the quantum state space, limiting both convergence behavior and operator-selection efficiency. We introduce Geo-ADAPT-VQE, a geometry-aware adaptive VQE algorithm that selects operators from a pool using the natural gradient rule. The geometric operator-selection rule enables the ansatz to grow along directions aligned with the underlying quantum-state geometry, thereby improving convergence and reducing the algorithm’s susceptibility to shallow local minima and saddle-point regions. We further provide an asymptotic convergence result. We present numerical simulations involving five molecules 
\ce{H5}, \ce{LiH}, \ce{HF}, \ce{H2O}, and \ce{BeH2}, which
 demonstrate that Geo-ADAPT-VQE achieves faster and more stable convergence compared to existing methods, while producing significantly shorter ansatz. In particular, Geo-ADAPT achieves up to $100$x reduction in energy error compared to existing methods.
\end{abstract}

\maketitle

\section{\label{sec:level1}Introduction}

Quantum chemistry is recognized as an area where quantum computing can offer practical computational advantages \cite{cao2019quantum}. This is primarily due to the rapid growth of the underlying quantum state space: the Hilbert space associated with molecular systems scales exponentially with the number of orbitals \cite{helgaker2013molecular,bader1991quantum,woolley1976quantum}. As a consequence, classical algorithms become intractable for computing the electronic structure of many strongly correlated systems, even with advanced approximations. These limitations have motivated the pursuit of quantum algorithms for solving electronic structure problems, which several studies suggest could provide a decisive advantage ~\cite{cao2019quantum,aspuru2005simulated,mcardle2020quantum,lanyon2010towards,whitfield2011simulation,reiher2017elucidating,bauer2020quantum}.
The Quantum Phase Estimation Algorithm (QPE)~\cite{kitaev1995quantum} was the first major algorithmic development for simulating electronic structure problems using quantum computers, which offers an exponential speedup over classical algorithms \cite{abrams1997simulation,abrams1999quantum}. 
However, QPE requires large circuit depths to achieve accurate phase resolution. This necessity leads to long coherent evolution times needed to implement these circuits, which exceed the capabilities of current noisy intermediate-scale quantum (NISQ) devices ~\cite{preskill2018quantum}. Even with algorithmic improvements that help reduce some overhead~\cite{cruz2020optimizing,berry2019qubitization}, practical demonstrations of QPE have, thus far, been limited to only a small number of molecular systems~\cite{o2019quantum,tranter2025high,paesani2017experimental,o2016scalable}.

To address these limitations, the Variational Quantum Eigensolver (VQE)~\cite{peruzzo2014variational} was introduced as a hybrid quantum–classical alternative designed for obtaining ground-state energies of molecular and many-body Hamiltonians on NISQ devices. In VQE, a parameterized quantum circuit, known as an ansatz, prepares a trial wavefunction, and a classical optimizer iteratively updates the circuit parameters to minimize the variational energy \cite{mcclean2016theory}. VQE has since been demonstrated on a variety of hardware platforms, for example, superconducting qubits ~\cite{colless2018computation,kandala2017hardware} and trapped ions \cite{hempel2018quantum,yung2014transistor,shen2017quantum}.
A central challenge in VQE is selecting an ansatz that is expressive enough to accurately estimate the ground state while still remaining shallow enough for NISQ hardware and manageable for classical optimization. One of the most widely used families of ansatz in quantum chemistry is the unitary coupled cluster singles and doubles (UCCSD) ansatz \cite{bartlett2007coupled,taube2006new}. UCCSD is the unitary form of the classical coupled-cluster method, adapted for implementation as a quantum circuit. In this ansatz, the wavefunction is generated by applying exponentials of single- and double-excitation operators, each associated with its own variational parameter. 

Despite its strong foundation in quantum chemistry, UCCSD presents several practical challenges:
\textit{1. Large Circuit Depth}. Implementing UCCSD ansatz on hardware requires mapping each single and double excitation into a sequence of one- and two-qubit gates. As a result, as the system size increases, incorporating all of these excitation operators results in quantum circuits that are significantly deeper \cite{romero2018strategies}. \textit{2. Large number of parameters}. Each excitation introduces its own variational parameter. As the pool of singles and doubles expands, the resulting high-dimensional optimization landscape becomes difficult to navigate. This complexity also makes the landscape more sensitive to noise, leading to challenges such as barren plateaus \cite{ryabinkin2018qubit,mcclean2018barren,cerezo2021cost}.
\textit{3. Trotterization ambiguity}.
To implement UCCSD on hardware, the full coupled-cluster operator must be factorized into a product of exponentials. However, different operator orderings yield distinct approximate states because Trotterization is not exact when only a finite number of steps is used. Consequently, UCCSD is not uniquely defined on a quantum device, and its performance varies depending on the specific factorization chosen \cite{evangelista2019exact}.
In practice, fixed-form ansatz such as UCCSD often contain many excitation operators that contribute little or no improvement to the energy, yet still inflate the circuit depth and the number of variational parameters \cite{fedorov2022unitary}. 

These limitations naturally prompt the question: can we construct the ansatz more selectively, adding only those operators that meaningfully reduce the energy?
This perspective motivates the class of adaptive ansatz methods, in which the circuit is constructed iteratively rather than predetermined. A notable example is ADAPT-VQE \cite{grimsley2019adaptive}, which introduced the concept of dynamically selecting operators based on gradient information. ADAPT-VQE consists of an outer loop that incrementally builds the ansatz and an inner optimization subroutine that updates its parameters. In the outer loop, ADAPT-VQE selects the most promising operator from a predefined pool based on its contribution to the energy gradient and adds it to the current ansatz. In the inner loop, once the ansatz is extended, a classical optimizer adjusts all variational parameters to minimize the energy before the next operator-selection step. This approach leads to shorter ansatz with fewer parameters.
Subsequent developments have further strengthened the adaptive‐ansatz paradigm introduced by ADAPT-VQE \cite{tang2021qubit,anastasiou2024tetris,bespalova2025k,stadelmann2025strategies,vaquero2025pruned,ramoa2025reducing}. For example, qubit-ADAPT-VQE \cite{tang2021qubit} extends the operator pool from fermionic excitations to qubit-space operators, enabling more hardware-efficient constructions and offering improved performance in settings where fermionic mappings introduce additional overhead. More recently, the TETRIS-ADAPT-VQE \cite{anastasiou2024tetris} showed that inserting mutually commuting operators in structured layers can significantly reduce circuit depth by allowing for the simultaneous execution of gates.

Despite recent advancements, it is important to note that adaptive VQE algorithms fundamentally rely on first-order gradient information for operator selection and do not account for the underlying geometry of the quantum state space. This presents a significant limitation, as geometry-aware methods in quantum learning and variational quantum algorithms have been shown to greatly enhance convergence. For instance, quantum natural gradient descent (QNGD) \cite{stokes2020quantum} that updates parameters along the steepest-descent direction given by a quantum information metric, such as the Fubini-Study metric in the space of pure quantum states. Recent works have shown that QNGD provides an advantage in optimizing parameterized quantum systems by taking optimization paths more aligned with the underlying geometric structure of quantum states, compared to other strategies \cite{tao2023laws,minervini2025quantum,wierichs2020avoiding,gacon2021simultaneous,yamamoto2019natural,sohail2025quantum,atif2022quantum,wang2023experimental,koczor2022quantum}. Motivated by these observations, we propose the Adaptive Geometric-Assembled Problem-Tailored Variational Quantum Eigensolver (Geo-ADAPT-VQE), a geometry-aware adaptive variational quantum eigensolver that employs a geometric operator-selection rule along with a metric-based inner optimization subroutine.

We summarize the key contributions of this work.

 \noindent\textit{1. Geometry-Aware Adaptive Ansatz Construction.} We develop a geometry-aware adaptive variational quantum eigensolver, named Geo-ADAPT-VQE (see Fig.~\ref{fig:geoADAPT} and Algorithm\ref{alg:geo-adapt-vqe}). The method incorporates the intrinsic geometry of the quantum state space via the Fubini–Study metric \cite{petz1996geometries,petz1998information} into the operator selection rule. At each iteration, among all operators in the pool, the one corresponding to the steepest descent direction in the space of quantum states is selected.

\noindent \textit{2. Convergence of Geo-ADAPT-VQE.} 
We establish asymptotic convergence guarantees for Geo-ADAPT-VQE (see Theorem \ref{thm:asymptotic_conv}). We prove that the Geo-ADAPT energy sequence converges and that the pool natural gradient vanishes asymptotically, implying that no operator in the pool can further decrease the energy. Using the quadratic geometric information (QGI) inequality~\cite{sohail2025quantum}, we further show that Geo-ADAPT converges globally to the minimum energy achievable by the operator pool. Furthermore, this convergence occurs at an exponential rate, with a diminishing residual error. To our knowledge, this provides one of the first convergence analyses for adaptive VQE methods.

\noindent \textit{3. Position-Optimized Geo-ADAPT.} 
We introduce Pos-Geo-ADAPT, an extension of Geo-ADAPT-VQE that optimizes not only which operator to add using a geometry-aware operator selection rule but also where to insert it within the ansatz. While the Geo-ADAPT and the ADAPT-VQE algorithms append operators at the end of the circuit, the Pos-Geo-ADAPT allows insertion at the beginning, between existing operators, or at the end. This positional flexibility enhances expressibility and reduces the energy error $(\mathrm{E} - \mathrm{E}_{\mathrm{FCI}})$, where $\mathrm{E}$ denotes the energy obtained by the algorithm and $\mathrm{E}_{\mathrm{FCI}}$ is the full configuration interaction (FCI) ground-state energy \cite{helgaker2013molecular}.

\noindent \textit{4. Numerical Evaluation.} 
We provide extensive numerical benchmarks of Geo-ADAPT and Pos-Geo-ADAPT against ADAPT-VQE, VQE with gradient descent (GD), and QNGD. We consider five molecular systems, \ce{H5}, \ce{LiH}, HF, \ce{H2O}, and \ce{BeH2}, across multiple bond lengths. The results demonstrate that Geo-ADAPT achieves more than an $80\%$ reduction in the effective ansatz complexity (EAC) required to reach chemical accuracy compared to standard VQE. Moreover, it exhibits up to $2$x faster convergence and requires up to $4$x fewer parameters than ADAPT-VQE to reach chemical accuracy. Furthermore, Geo-ADAPT achieves up to a $100$x reduction in energy error compared to other methods.

In what follows, we begin by reviewing the necessary background on VQE, the UCCSD ansatz, and ADAPT-VQE in Section~\ref{sec:background}. Our main contributions appear in Section~\ref{sec:main_results}: the Geo-ADAPT-VQE algorithm is introduced in Section~\ref{subsec:geoadapt}, its convergence properties are established in Section~\ref{subsec:convergence}, and numerical experiments on molecular systems are presented in Section~\ref{subsec:numerical}.

\section{Background}
\label{sec:background}

\noindent\textit{\textbf{VQE for Quantum Chemistry.}}
VQE is a hybrid quantum--classical algorithm designed to approximate the ground-state energy of molecular systems. In quantum chemistry, the electronic structure problem aims to determine the ground-state energy of the electronic Hamiltonian, which in second quantization can be written as
\begin{equation}
    \hat{H} = \sum_{pq} h_{pq} a_p^\dagger a_q + \frac{1}{2} \sum_{pqrs} h_{pqrs} a_p^\dagger a_q^\dagger a_r a_s,
\end{equation}
where \(a_p^\dagger\) and \(a_p\) are fermionic creation and annihilation operators acting on spin orbitals. The indices \(p,q,r,s\) label spin orbitals in the chosen basis set, typically derived from a Hartree--Fock (HF) calculation that partitions the orbitals into occupied and virtual subsets. The coefficients \(h_{pq}\) and \(h_{pqrs}\) correspond to the one- and two-electron integrals, which encode the kinetic energy, electron--nuclear attraction, and electron--electron Coulomb interactions. These integrals are computed from the molecular orbitals obtained in a chosen atomic basis (e.g., STO-3G, cc-pVDZ).

To simulate this Hamiltonian on a quantum computer, the fermionic operators are mapped onto qubit operators using transformations such as the Jordan--Wigner\cite{fradkin1989jordan}, Bravyi--Kitaev \cite{bravyi2002fermionic,tranter2015b}, or parity mapping \cite{yordanov2020efficient}. This produces a qubit Hamiltonian of the form \(\hat{H} = \sum_i c_i P_i\), where \(P_i\) are tensor products of Pauli operators (Pauli strings) acting on \(n\) qubits and \(c_i \in \mathbb{R}\) are the corresponding coefficients. The number of qubits required corresponds to the number of spin orbitals included in the active space.
VQE employs a parameterized quantum circuit (ansatz) \( |\Psi(\boldsymbol{\theta})\rangle = U(\boldsymbol{\theta})|0\rangle \) to prepare trial wavefunctions. The energy of the system, which is represented by the expectation value of the Hamiltonian corresponding to the ansatz $|\Psi(\boldsymbol{\theta})\rangle$ is given by:
\begin{equation}
    E(\boldsymbol{\theta}) = \langle \Psi(\boldsymbol{\theta}) | \hat{H} | \Psi(\boldsymbol{\theta}) \rangle.
\end{equation}
This energy value is computed on a quantum processor by measuring the expectation values of the individual Pauli terms \(P_i\). A classical optimizer iteratively updates the circuit parameters \(\boldsymbol{\theta}\) to minimize \(E(\boldsymbol{\theta})\). By the variational principle, the minimum value \(E(\boldsymbol{\theta}^*)\) provides an upper bound to the true ground-state energy \(E_0\), i.e., \(E(\boldsymbol{\theta}^*) \ge E_0\). In practical molecular simulations, the goal is to reach \emph{chemical accuracy}, corresponding to an energy error of less than \(1~\text{kcal/mol} \approx 1.6\times10^{-3}\) Hartree.

\vspace{5pt}
\noindent{\textbf{UCCSD Ansatz.}}
\label{subsec:UCCSD Ansatz}
A widely used chemically motivated ansatz is the Unitary Coupled Cluster with Singles and Doubles (UCCSD). The variational wavefunction is constructed by applying a unitary operator to the Hartree--Fock reference state \(|\Psi_{\mathrm{HF}}\rangle\):
\[
|\psi_{\mathrm{UCCSD}}\rangle = \widehat{U}_{\mathrm{UCCSD}} |\Psi_{\mathrm{HF}}\rangle,
\qquad
\widehat{U}_{\mathrm{UCCSD}} = e^{\,\mathrm{i}(T - T^\dagger)}.
\]
The cluster operator \(T\) is separated into single and double excitation components,
$T = T_1 + T_2,$
where
\[
T_1 = \sum_{i,a} \theta_i^{a} a_a^\dagger a_i, \qquad
T_2 = \sum_{i,j,a,b} \theta_{ij}^{ab} a_a^\dagger a_b^\dagger a_j a_i.
\]
Here, \(i,j\) label occupied orbitals and \(a,b\) label virtual orbitals in the Hartree--Fock reference configuration. The amplitudes \(\theta_i^{a}\) and \(\theta_{ij}^{ab}\) are taken to be real-valued so that \(i(T - T^\dagger)\) is Hermitian, ensuring that \(\widehat{U}_{\mathrm{UCCSD}}\) is unitary.

Since the excitation operators do not commute, the exact exponential cannot be implemented directly on current hardware. A first-order Trotter expansion approximates the unitary as a product of exponentials of individual excitation generators:
\[
\widehat{U}_{\mathrm{UCCSD}}^{(1)} \approx 
\bigg( \prod_{\mu\in\{ia\}} e^{\,\mathrm{i} \theta_{\mu} \hat{\tau}_{\mu}} \bigg)
\bigg( \prod_{\nu\in\{ijab\}} e^{\,\mathrm{i} \theta_{\nu} \hat{\tau}_{\nu}} \bigg),
\]
where each \(\hat{\tau}_{ia}\) and \(\hat{\tau}_{ijab}\) is an anti-Hermitian excitation operator defined as
\begin{align*}
    \hat{\tau}_{ia} &= (a_a^\dagger a_i - a_i^\dagger a_a) \eqand
\hat{\tau}_{ijab} =  (a_a^\dagger a_b^\dagger a_j a_i - a_i^\dagger a_j^\dagger a_b a_a).
\end{align*}
This representation explicitly separates the single- and double-excitation unitaries, with each term corresponding to a physically interpretable excitation process. The first-order Trotterized UCCSD wavefunction is therefore expressed as
\[
|\Psi_{\mathrm{UCCSD}}^{(1)}\rangle =
\widehat{U}_{\mathrm{UCCSD}}^{(1)}
|\Psi_{\mathrm{HF}}\rangle,
\]
which clearly shows the ansatz as a sequential application of parameterized unitary rotations generated by single and double excitation operators.

\begin{figure*}[]
    \centering
    \includegraphics[scale=0.31]{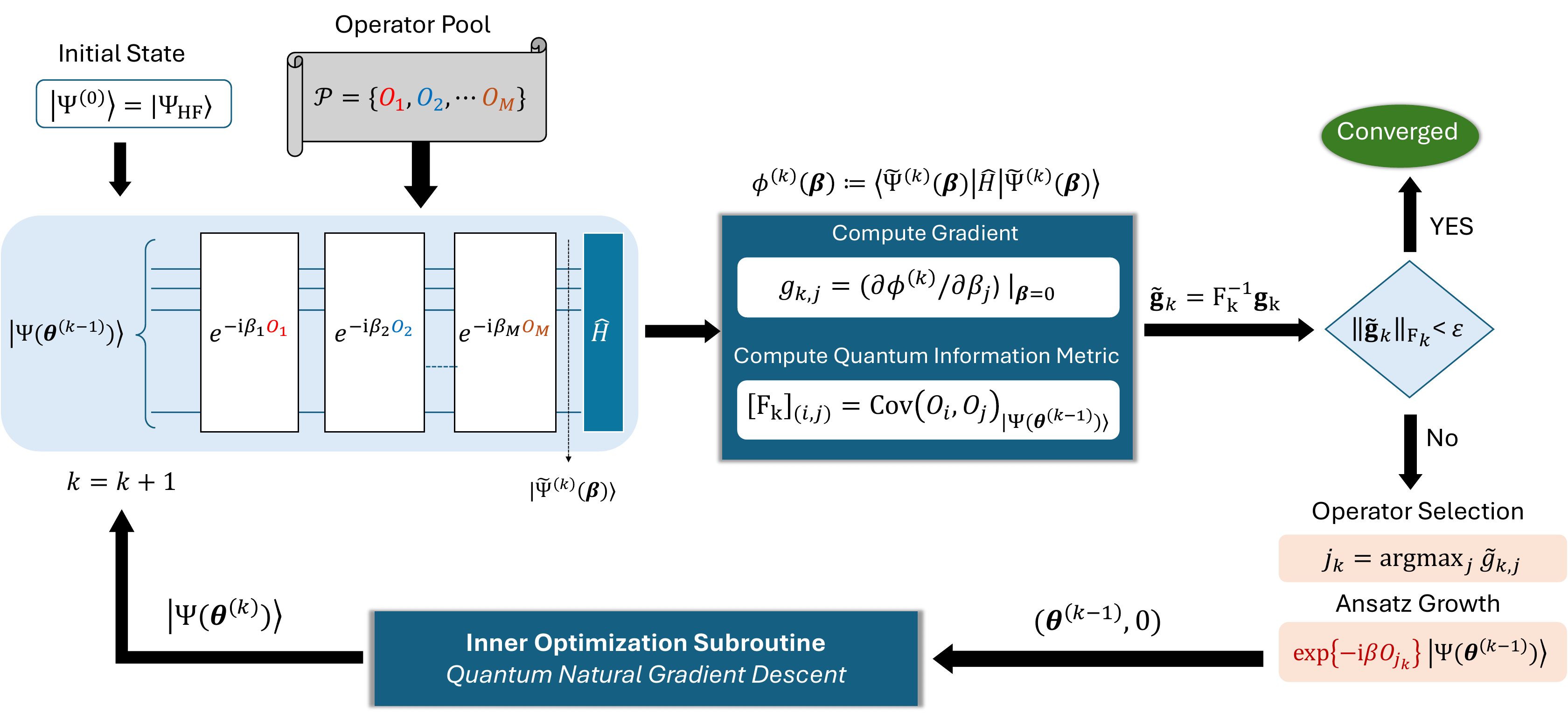}
\caption{
Overview of the proposed Geo-ADAPT algorithm. 
Starting from the Hartree–Fock reference state $|\Psi_{\mathrm{HF}}\rangle$, the method 
iteratively constructs an ansatz from an operator pool $\mathcal{P}$. 
At the $k$-th outer iteration, the algorithm evaluates the gradient 
$\mathbf{g}_{k}$ and the corresponding pool information metric $\mathrm{F}_{k}$ to form 
the pool natural gradient $\tilde{\mathbf{g}}_{k}$. 
A geometric operator-selection rule chooses the next operator 
$j_{k} = \arg\max_j |\tilde{g}_{k,j}|$, and the ansatz is extended by 
$\exp(-\mathrm{i} \beta O_{j_{k}})$. 
The resulting parameters are optimized through an inner loop using 
QNGD, producing the updated ansatz $|\Psi(\boldsymbol{\theta})^{(k)}\rangle$. 
The procedure repeats until the norm of the natural gradient 
$\|\tilde{\mathbf{g}}_{k}\|_{\mathrm{F}_{k}}$ falls below a prescribed tolerance, 
indicating convergence.
}\label{fig:geoADAPT}
\end{figure*}

\vspace{5pt}
\noindent{\textbf{ADAPT-VQE.}} The Adaptive Derivative-Assembled Pseudo-Trotter Variational Quantum Eigensolver (ADAPT-VQE)~\cite{grimsley2019adaptive} is an adaptive variational algorithm that constructs the ansatz iteratively rather than fixing it a priori. At each iteration, operators are selected from a predefined pool based on their energy gradients with respect to the current state. The main steps of the ADAPT-VQE procedure are summarized as follows:
\begin{itemize}[leftmargin=10pt]

    \item \textit{State Initialization:} Initialize the ansatz with the Hartree–Fock state.
    \item \textit{Operator Selection:} Evaluate the energy gradients for all operators in the pool with respect to the current state. Select the operator with the largest gradient magnitude and append it to the ansatz. If the gradient norm falls below a chosen threshold, terminate the algorithm.
    \item \textit{Parameter Optimization:} Re-optimize all variational parameters of the updated ansatz to minimize the energy. Repeat the operator-selection and optimization steps until convergence.
\end{itemize}

\section{Main Results}
\label{sec:main_results}
\noindent\textbf{Geo--ADAPT--VQE.}\label{subsec:geoadapt}
It adaptively constructs the variational ansatz by utilizing the geometric structure of the space of quantum states via a quantum information (Riemannian) metric tensor. The algorithm selects an operator from the predefined pool according to a quantum information-metric-based criterion. This criterion generalizes the gradient-based selection rule of ADAPT-VQE to the space of quantum states by selecting an operator that yields the steepest energy descent with respect to the quantum information geometry.

We now describe the Geo-ADAPT algorithm, and summarize it in Algorithm \ref{alg:geo-adapt-vqe} and illustrated in Fig.~\ref{fig:geoADAPT}. 

\noindent \textit{1. Operator Pool.} Define an operator pool \[\mathcal{P} := \{O_1, O_2, \cdots, O_M\}\] that includes all possible excitation generators considered for inclusion in the ansatz. For instance, \(O_j\) may represent generators of single and double excitations, such as \(\hat{\tau}_{ia}\) and \(\hat{\tau}_{ijab}\), respectively, as outlined in Section \ref{subsec:UCCSD Ansatz}.

\vspace{3pt}
\noindent \textit{2. Classical Preprocessing.}
After constructing the operator pool $\mathcal{P}$, the algorithm includes classical preprocessing steps analogous to standard VQE. On the classical computer, the one- and two-electron integrals of the molecular Hamiltonian are computed using a chosen basis set.
Each fermionic operator, including both pool operators and Hamiltonian terms, is then transformed into its qubit representation via a fermion-to-qubit mapping, such as the Jordan--Wigner or Bravyi--Kitaev transformation \cite{fradkin1989jordan,bravyi2002fermionic,tranter2015b,yordanov2020efficient}.
These preprocessed Hamiltonian and excitation operators form the inputs to the adaptive procedure executed on the quantum hardware.

\vspace{3pt}
\noindent \textit{3. State Initialization.}
The algorithm begins with a reference state, such as the Hartree–Fock ground state \(|\Psi_{\mathrm{HF}}\rangle\). At this point, no excitations are included in the ansatz.

\vspace{3pt}
\noindent
Next, suppose at the end of $(k\!-\!1)$-th iteration, the ansatz is given as
\[
\ket{\Psi(\boldsymbol{\theta}^{(k-1)})}
:= \bigg(\prod_{t=1}^{k-1} e^{-\mathrm{i}\theta_{j_t}^{(k-1)} O_{j_t}}\bigg)\ket{\Psi_{\mathrm{HF}}},
\]
where
\(
\boldsymbol{\theta}^{(k-1)} := (\theta_{j_1}^{(k-1)}, \theta_{j_2}^{(k-1)}, \ldots, \theta_{j_{k-1}}^{(k-1)})
\)
contains all optimized parameters up to iteration $(k\!-\!1)$.
Here, $j_t$ is the index selected at iteration $t$, $O_{j_t} \in \mathcal{P}$ denotes the corresponding operator, and $\theta_{j_t}$ is the optimized parameter associated with it. The corresponding energy is given as
\[
E(\boldsymbol{\theta}^{(k-1)})
= \bra{\Psi(\boldsymbol{\theta}^{(k-1)})}\hat{H}\ket{\Psi(\boldsymbol{\theta}^{(k-1)})}.
\]
For convenience, we use the following shorthand notation in the rest of the paper
\[
|\Psi^{(k-1)}\rangle := |\Psi(\boldsymbol{\theta}^{(k-1)})\rangle\eqand
E^{(k-1)} := E(\boldsymbol{\theta}^{(k-1)}).
\]
Now, at the beginning of the $k$-th iteration, the previously optimized parameters
$\boldsymbol{\theta}^{(k-1)}$ remain fixed while selecting the next operator to append.
For the operator pool $\mathcal{P}$, we define a \emph{trial-extended} ansatz as
\[
|\tilde{\Psi}^{(k)}(\boldsymbol{\beta})\rangle
= U_{\mathcal{P}}(\boldsymbol{\beta})\,|\Psi^{(k-1)}\rangle,
\]
where $U_{\mathcal{P}}(\boldsymbol{\beta}):=\prod_{j=1}^{M} e^{-\mathrm{i}\beta_{j} O_{j}}$ is the product of the unitary generate by the operators in $\mathcal{P}$ and $\boldsymbol{\beta}:= (\beta_1, \beta_2, \ldots, \beta_M)$ denotes the parameters associated with the extended ansatz.
The corresponding energy is given as
\begin{align*}
    \phi^{(k)}(\boldsymbol{\beta})
&= \langle \tilde{\Psi}^{(k)}(\boldsymbol{\beta}) |\, \hat{H} \,| \tilde{\Psi}^{(k)}(\boldsymbol{\beta}) \rangle.
\end{align*}
Note that by definition $\phi^{(k)}(\boldsymbol{0})=E(\boldsymbol{\theta}^{(k-1)})$.
The partial derivative of $\phi^{(k)}$ at $\boldsymbol{\beta}=\boldsymbol{0}$ is given as
\[
g_{k,j}
:= \partial_j \phi^{(k)}(\boldsymbol{\beta})|_{\boldsymbol{\beta}=\boldsymbol{0}} = -\mathrm{i}\,\langle \Psi^{(k-1)} | [\hat{H}, O_j] | \Psi^{(k-1)} \rangle .
\]
Collecting these components, we get the gradient vector:\[
\mathbf{g}_k
:= \big( g_{k,1},\ g_{k,2},\ \ldots,\ g_{k,M} \big)^{\top}.
\]
To incorporate the local geometry of the quantum state space, we use the Fubini--Study metric, which serves as the Riemannian metric tensor on the space of pure quantum states. The entries of the metric are given as
\[ [\mathrm{F}_k]_{(i,j)}:= \text{Cov}(\Upsilon_{i}(\boldsymbol{\beta}),\Upsilon_{j}(\boldsymbol{\beta}))_{|\Psi^{(k-1)}\rangle},\]
where $$\Upsilon_{i}(\boldsymbol{\beta}):= -\mathrm{i}(\partial_{i} U_{\mathcal{P}}^{\dagger}(\boldsymbol{\beta}) )U_{\mathcal{P}}(\boldsymbol{\beta}) = \mathrm{i} U_{\mathcal{P}}^\dagger(\boldsymbol{\beta})(\partial_{i}U_{\mathcal{P}}(\boldsymbol{\beta})),$$ is an observable and the covariance between two observables $A$ and $B$ acting on the state $|\Psi\rangle$ is defined as 
\begin{align*}
\mathrm{Cov}(A,B)_{|\Psi\rangle}
\!:=& \tfrac{1}{2}\langle \Psi |\{A,B\}|\Psi\rangle
    \rangle\!- \!\langle \Psi|A|\Psi\rangle
      \langle \Psi|B|\Psi\rangle .
\end{align*}
Evaluating the metric
at $\boldsymbol{\beta} = \boldsymbol{0}$ gives
\begin{equation}\label{eqn:outer_fisher}
    [\mathrm{F}_k]_{(i,j)}
    = \mathrm{Cov}(O_i,O_j)_{|\Psi^{(k-1)}\rangle}.
\end{equation}
Given the gradient vector $\mathbf{g}_k$ and the
metric tensor $\mathrm{F}_k$, the natural gradient at the $k$-th iteration is given as
\begin{equation}\label{eqn:natGrad}
    \tilde{\mathbf{g}}_k
    := \mathrm{F}_k^{-1}\,\mathbf{g}_k .
\end{equation}
The $j$-th entry of the natural gradient $\tilde{g}_{k,j}:=(\tilde{\mathbf{g}}_k)_j$ quantifies the steepest-descent direction provided by the operator $O_j$. In other words, it measures how effectively $O_j$ lowers the energy after augmenting it to the current ansatz.

\vspace{3pt}
\noindent\textit{4. Geometric Operator Selection.} If the norm of the natural gradient with respect to the quadratic norm $\|\cdot\|_{\mathrm{F}_k}$ is less than a predetermined threshold~$\varepsilon$, i.e.,
\[
\|\tilde{\mathbf{g}}_k\|_{\mathrm{F}_k} := \sqrt{\mathbf{g}_k^\top {\mathrm{F}_k}^{-1} \mathbf{g}_k} < \varepsilon,
\] then Geo-ADAPT has converged, and the algorithm terminates.
Otherwise, select the
operator with the largest absolute natural gradient,
\[
j_k = \arg\max_{j\in\{1,\ldots,M\}} |\tilde{{g}}_{k,j}|.
\]
If $j_k = j_{k-1}$, i.e., the selected operator $O_{j_k}$ coincides with the immediately preceding 
operator $O_{j_{k-1}}$, then adding it would not add any new information to the ansatz. 
In this case, we skip adding $O_{j_k}$ and proceed to Step 5. Note that we are choosing operators with replacement from the pool.

\noindent\textit{Remark.}
If we take $\mathrm{F}_k = I$, then the geometric-operator selection weighting is removed, and the natural gradient
direction reduces to the standard Euclidean gradient. In this case, Geo‑ADAPT‑VQE becomes
identical to the ADAPT‑VQE.

\noindent Step 5: \textit{Parameter Optimization.}
Once the operator $O_{j_k}$ is selected, it is appended to the current ansatz to form the
$k$-th iteration extended ansatz, given as
\begin{equation}
|\Psi(\boldsymbol{\tilde{\theta}}^{(k-1)})\rangle
    = e^{-\,\mathrm{i}\,\beta O_{j_k}}\,|\Psi(\boldsymbol{\theta}^{(k-1)})\rangle,
\end{equation}
where $\beta$ is the new parameter associated with the selected operator $O_{j_k}$ and $\boldsymbol{\tilde{\theta}}^{(k-1)}:= (\boldsymbol{{\theta}}^{(k-1)},\beta)$ is the extended parameter vector.  
For the inner optimization subroutine at the 
$k$-th outer iteration, the parameter vector is initialized as
\begin{equation}
\boldsymbol{\tilde{\theta}}^{(k,0)} := (\boldsymbol{\theta}^{(k-1)}, 0),
\end{equation}
so that the newly introduced parameter starts at zero while the previously optimized parameters remain unchanged.
These parameters are then jointly optimized using the QNGD update rule
\begin{equation}
\boldsymbol{\tilde{\theta}}^{(k,l+1)}
    = \boldsymbol{\tilde{\theta}}^{(k,l)} - \eta\,[{\mathsf{F}(\boldsymbol{\tilde{\theta}}^{(k,l)})}]^{-1}\,\nabla E(\boldsymbol{\tilde{\theta}}^{(k,l)}),
\end{equation}
where $\eta$ is the learning rate, ${\mathsf F}$ is the Fubini-Study metric, and $$E(\boldsymbol{\tilde{\theta}}^{(k,l)}):= \langle\Psi(\boldsymbol{\tilde{\theta}}^{(k,l)})|H|\Psi(\boldsymbol{\tilde{\theta}}^{(k,l)})\rangle$$ is the energy, both evaluated with respect to the variational state$|\Psi(\boldsymbol{\tilde{\theta}}^{(k,l)})\rangle$. The QNGD updates are applied for a fixed number of optimization steps
$\kappa$. At the end of this optimization stage, we obtain the updated parameter set
$\boldsymbol{\theta}^{(k)}:= \boldsymbol{\tilde{\theta}}^{(k,\kappa)}$.
The algorithm then returns to Step~4 to evaluate the next operator to be added.

In Step 5, we employ QNGD since the operator selection is guided by geometric considerations, and the subsequent parameter optimization should follow the same geometry. Thus, QNGD is used for the inner optimization subroutine, as its updates align with the chosen geometric direction and reinforce the benefits of geometry-aware operator selection.

\begin{algorithm}[ht]
\caption{Geo-ADAPT-VQE}
\label{alg:geo-adapt-vqe}
\DontPrintSemicolon
\setstretch{1.25}
\LinesNumbered
\SetKwFunction{FOpt}{QNGD}

\KwIn{Hamiltonian $\hat{H}$, initial state $\ket{\Psi_{\mathrm{HF}}}$, operator pool $\mathcal{P}\! =\!\{O_j\}_{j=1}^M$,   tolerance $\varepsilon > 0,$ maximum outer iterations $K,$ and maximum inner iterations $\kappa$}
\KwOut{Updated Ansatz $|\Psi(\boldsymbol{\theta}^{(K)})\rangle$}

\tcc{Initialization}
Set $\boldsymbol{\theta}^{(0)} \leftarrow \emptyset$, and $\ket{\Psi^{(0)}} \leftarrow \ket{\Psi_{\mathrm{HF}}}$\;

\For{$\normalfont k = 1$ to $K$}
{
\For{$j = 1$ \KwTo $M$}{
    \tcc{Compute gradient}
    ${g}_{k,j} \leftarrow -\mathrm{i}\langle \Psi^{(k-1)} | [\,\hat{H}, O_j\,] | \Psi^{(k-1)} \rangle$\;

    \tcc{Compute information metric}
    \For{$i = 1$ \KwTo $M$}{
        $[\mathrm{F}_k]_{(i,j)} \leftarrow 
       \mathrm{Cov}(O_i,O_j)_{|\Psi^{(k-1)}\rangle}$\;
    }
}
    $\tilde{\mathbf{g}}_k \leftarrow \mathrm{F}_k^{-1}{\mathbf{g}}_k$\;
 \medskip
  \tcc{Stopping criterion}
    \If{$\|\tilde{\mathbf{g}}_{k}\|_{\mathrm{F}_k}< \varepsilon$}{
        \textbf{break}\;
    }
 \medskip  
    \tcc{Geometric operator selection}
    Select index $j_k := \arg\max_{j} |\tilde{g}_{k,j}|$\;
\If{${j_k} \neq {j_{k-1}}$}{
$|\Psi(\boldsymbol{\theta}^{(k-1)}, \beta)\rangle \leftarrow e^{-\mathrm{i}\beta O_{j_k}} |\Psi(\boldsymbol{\theta}^{(k-1)})\rangle$\;
}
\medskip
            \tcc{Inner QNGD optimization}
    Initialize $\boldsymbol{\tilde{\theta}}^{(k,0)} \leftarrow (\boldsymbol{\theta}^{(k-1)}, 0)$\;
    \For{$\ell = 0$ \KwTo $\kappa - 1$}{
        Compute $\mathsf{F}(\boldsymbol{\tilde{\theta}}^{(k,l)})$ and $\nabla E(\boldsymbol{\tilde{\theta}}^{(k,l)})$\;
        $\boldsymbol{\tilde{\theta}}^{(k,\ell+1)} \leftarrow \boldsymbol{\tilde{\theta}}^{(k,\ell)} - \eta\,[\mathsf{F}(\boldsymbol{\tilde{\theta}}^{(k,l)})]^{-1} \nabla E(\boldsymbol{\tilde{\theta}}^{(k,l)})$\;
    }
    $\boldsymbol{\theta}^{(k)} \leftarrow \boldsymbol{\tilde{\theta}}^{(k,\kappa)}$\;
    Updated ansatz $|\Psi(\boldsymbol{\theta}^{(k)})\rangle$
}
\KwRet{$ |\Psi(\boldsymbol{\theta}^{(K)})\rangle$}
\end{algorithm}

\subsection{Convergence Analysis}
\label{subsec:convergence}
In this section, we present the convergence analysis of the Geo-ADAPT algorithm. We begin with the following assumptions on the energy $\phi^{(k)}$ associated with the trail-extended ansatz and the pool information metric $\mathrm{F}_k$.  

\vspace{5pt}
\noindent\textbf{A1 (L-smooth trial-extended energy).}
For each outer iteration $k$, the trial-extended energy $\phi^{(k)}$ is coordinate-wse $L$-Lipschitz continuous, i.e., for each coordinate $j$ and $\beta\in\mathbb{R}$, we have
\[
\phi^{(k)}(\beta\mathbf{e}_j)
\;\le\;
\phi^{(k)}(\boldsymbol{0})
+
\beta\, g_{k,j}
+
\frac{L}{2}\,\beta^2,\]
where the constant $L>0$ and 
$\mathbf{e}_j$ is a vector with a one in position $j$ and zero in all other positions.

\vspace{2pt}
\noindent\textbf{A2 (Bounded Pool Information Metric).}
For the metric $\mathrm{F}_k$ defined in \eqref{eqn:outer_fisher}, that there exists constants 
$0 < \mu < \lambda < \infty$
such that
\[
\mu \mathrm{I} 
\;\preceq\; 
\mathrm{F}_k 
\;\preceq\;
\lambda \mathrm{I} ,
\qquad \text{for all}\, k.
\]
\textbf{A3 (Diagonal Dominance).}
For each outer iteration $k$, 
the element of the pool information metric \eqref{eqn:outer_fisher} satisfies
\[
\sum_{t\neq j_k}|[\mathrm{F}_k]_{(j_k,t)}|
\le
(1-\rho)[\mathrm{F}_k]_{(j_k,j_k)}, 
\]
where $j_k = \arg\max_{j\in\{1,\ldots,M\}} |\tilde{{g}}_{k,j}|.$

\vspace{2pt}
\noindent\textbf{A4: (Approximate Optimization Subroutine)}
Let $\bar{E}^{(k)} := \inf_{\boldsymbol{\tilde{\theta}}^{(k)}}\,{E}(\boldsymbol{\tilde{\theta}}^{(k)})$ 
denote the optimal energy achievable by the extended ansatz at the $k$-th outer iteration.
Assume that the inner optimization subroutine returns $\boldsymbol{\theta}^{(k)}$ at the $k$-th outer iteration
satisfying
\[
E(\boldsymbol{\theta}^{(k)}) \;\le\; \bar{E}^{(k)} +\delta_k,
\]
where $\delta_k \ge 0 \eqand 
\sum_{k=1}^{\infty}\delta_k < \infty.$
The constant $\delta_k$ models the fact that the inner optimization subroutine might not fully optimize the extended ansatz to its optimal energy at each outer iteration. For instance, $\delta_k$ may arise if the inner optimization is stopped
after reaching a predetermined computational budget, or if it converges to a local
minima. When the subroutine reaches the
optimal energy $\bar{E}^{(k)}$, then $\delta_k = 0$.

\vspace{5pt}
With Assumptions A1 through A4 in place, we now discuss the convergence analysis for Geo-ADAPT. A key ingredient is the \textit{descent property for a single outer iteration}. By descent, we mean that once an operator $O_{j_k}$ is selected at the $k$-th outer iteration using the operator selection rule, the inner optimization routine must produce an updated parameter $\beta$, such that for every outer iteration $k$, we have
$\phi^{(k)}(\beta\mathbf{e}_{j_k}) <  \phi^{(k)}(\boldsymbol{0}) =  E^{(k-1)}.$
In the following lemma, we demonstrate that the geometric operator-selection rule, along with the QNGD inner optimization subroutine of Geo-ADAPT, guarantees this property. A detailed proof can be
found in Appendix \ref{app:proof:lem:descent}.
\begin{lemma}\label{lemma:descent}
Let $j_k = \arg\max_{j\in\{1,\ldots,M\}} |\tilde{g}_{k,j}|$
be the index selected at the $k$-th outer iteration. For a learning rate
$\eta = \mu / L$, consider the inner optimization update
\[
{\beta} = - \eta\,\frac{{g}_{k,j_k}}{[\mathrm{F}_k]_{(j_k,j_k)}}.
\]
Then, for each outer iteration $k$, the following inequality holds
\[
\phi^{(k)}({\beta \mathbf{e}_{j_k}})
\;\le\;
E^{(k-1)}
-\rho^2\frac{\mu}{2L}\,
\tilde{g}_{k,j_k}^2.
\]
\end{lemma}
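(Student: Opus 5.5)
The plan has three steps. First, apply A1 along the selected coordinate to get a one-dimensional descent bound. Second, use A2 to handle the step size. Third, use A3 together with the maximality of $j_k$ to convert the Euclidean component $g_{k,j_k}$ into the natural-gradient component $\tilde g_{k,j_k}$. I abbreviate $D:=[\mathrm{F}_k]_{(j_k,j_k)}$, $g:=g_{k,j_k}$ and $\tilde g:=\tilde g_{k,j_k}$, and recall $\phi^{(k)}(\boldsymbol 0)=E^{(k-1)}$.

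\emph{Step 1 (one-dimensional descent).} Plug $\beta=-\eta g/D$ into A1 with $j=j_k$:
\[
\phi^{(k)}(\beta\mathbf e_{j_k})\le E^{(k-1)}-\frac{g^2}{D}\Big(\eta-\frac{L\eta^2}{2D}\Big).
\]
With $\eta=\mu/L$ the bracket equals $\frac{\mu}{2L}\cdot\frac{2D-\mu}{D}$. Because $\mathrm F_k$ is symmetric, its diagonal entry satisfies $D=\mathbf e_{j_k}^\top\mathrm F_k\mathbf e_{j_k}\ge\mu$ by A2. Hence the bracket is positive, and
\[
\phi^{(k)}(\beta\mathbf e_{j_k})\le E^{(k-1)}-\frac{\mu}{2L}\cdot\frac{g^2(2D-\mu)}{D^2}.
\]

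\emph{Step 2 (transfer to the natural gradient).} Since $\mathbf g_k=\mathrm F_k\tilde{\mathbf g}_k$, we can write $g=D\tilde g+\sum_{t\ne j_k}[\mathrm F_k]_{(j_k,t)}\tilde g_{k,t}$. The index $j_k$ maximizes $|\tilde g_{k,t}|$, so $|\tilde g_{k,t}|\le|\tilde g|$ for every $t$. Combining this with A3 bounds the off-diagonal sum by $(1-\rho)D|\tilde g|$. The reverse triangle inequality then gives $|g|\ge\rho D|\tilde g|$, so $g^2/D^2\ge\rho^2\tilde g^2$. Substituting into Step 1 yields
\[
\phi^{(k)}(\beta\mathbf e_{j_k})\le E^{(k-1)}-\rho^2\frac{\mu}{2L}(2D-\mu)\,\tilde g^2 .
\]

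\emph{Step 3 (closing the constant; main obstacle).} To reach the stated bound $\rho^2\frac{\mu}{2L}\tilde g^2$ exactly, it remains to show $2D-\mu\ge 1$. A2 alone only gives $2D-\mu\ge D\ge\mu$. The problem is scale dependence. Rescaling $O_j\mapsto cO_j$ leaves the left-hand side invariant, but the claimed right-hand side scales as $c^{-2}$. So the constant $1$ must come from a normalization of the pool, not from A1--A3 as abstract inequalities.

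What I would try first is to locate this normalization inside the definitions already fixed in the paper. The trial energy $\phi^{(k)}$, the metric $\mathrm F_k$ of \eqref{eqn:outer_fisher}, and the constants $\mu,\lambda,L$ are all defined with respect to the same pool generators. I would therefore read the statement with $\mu,L$ expressed in units where the selected generator has unit pool-metric weight, i.e.\ $D\ge 1$; then $2D-\mu\ge 2-\mu\ge1$ whenever $\mu\le1$, and $2D-\mu\ge D\ge\mu\ge1$ otherwise. An equivalent route is to choose the scale of the generators so that $\mu\ge1$, which again gives $2D-\mu\ge D\ge\mu\ge1$.

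I expect this last normalization step to be the only delicate point. Steps 1 and 2 are routine consequences of A1, A2, A3 and the selection rule. Step 3 is where the precise constant in the lemma is decided, and I would state explicitly which convention on the pool scaling is used there.
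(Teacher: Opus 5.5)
Your Steps 1 and 2 are the paper's proof. The paper substitutes $\beta=-\eta g_{k,j_k}/[\mathrm{F}_k]_{(j_k,j_k)}$ into A1, writes $g_{k,j_k}=[\mathrm{F}_k]_{(j_k,j_k)}\tilde g_{k,j_k}\rho_{k,j_k}$, and gets $\rho_{k,j_k}\ge\rho$ from the maximality of $j_k$ and A3. Your reverse triangle inequality is the same estimate, without dividing by $\tilde g_{k,j_k}$.

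Your Step 3 diagnosis is also correct, and it exposes a slip in the paper rather than a hidden convention. Writing $D=[\mathrm{F}_k]_{(j_k,j_k)}$, the paper reaches $E^{(k-1)}-\rho^2\big(\eta D-\tfrac{L}{2}\eta^2\big)\tilde g_{k,j_k}^2$. It then simply asserts that $\eta=\mu/L$ gives the coefficient $\tfrac{\mu}{2L}$. In fact $\eta D-\tfrac{L}{2}\eta^2=\tfrac{\mu(2D-\mu)}{2L}$, and $D\ge\mu$ only gives $\ge\tfrac{\mu^2}{2L}$. No normalization of the pool appears anywhere in the paper.

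The gap in your proposal is the remedy. Imposing $D\ge1$, or rescaling the generators until $\mu\ge1$, is an extra hypothesis, not a reading of the statement. The pool fixes $\mathrm{F}_k$, $\mu$, $L$ and $\tilde{\mathbf{g}}_k$, and your own scaling argument shows that rescaling it changes the claimed inequality. Moreover, for generators with spectrum in $[-1,1]$ (Pauli strings, Hermitian excitation generators) one has $D\le 1$, usually strictly.

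In fact the lemma as stated is false. Take one qubit, $\hat H=Z$, $\mathcal{P}=\{Y/2\}$ (so $M=1$ and A3 holds with $\rho=1$), and $|\Psi^{(k-1)}\rangle=|{+}\rangle$. Then $E^{(k-1)}=0$, $g_{k,1}=-1$, $[\mathrm{F}_k]_{(1,1)}=\mathrm{Var}(Y/2)=\tfrac14$ and $\tilde g_{k,1}=-4$. Also $\phi^{(k)}(\beta\mathbf{e}_1)=-\sin\beta$, so A1 holds with $L=1$ and A2 with $\mu=\tfrac14$. The prescribed step is $\beta=1$, and the lemma claims $-\sin 1\le -2$, which lies below the spectrum of $Z$. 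More generally, with $\rho=1$ and any admissible $\mu\le\tfrac14$ and $L$, the claim reads $\sin(4\mu/L)\ge 8\mu/L$, which never holds.

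So do not try to force the constant. Your Steps 1 and 2 already prove the correct, scale-invariant bound $\phi^{(k)}(\beta\mathbf{e}_{j_k})\le E^{(k-1)}-\rho^2\tfrac{\mu(2D-\mu)}{2L}\tilde g_{k,j_k}^2\le E^{(k-1)}-\rho^2\tfrac{\mu^2}{2L}\tilde g_{k,j_k}^2$. The constants in Lemma~\ref{lem:inner} and Theorem~\ref{thm:asymptotic_conv} should then carry $\mu^2$ in place of $\mu$. The qualitative convergence conclusions are unaffected.
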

Hence, whenever \(|\tilde g_{k,j_k}|\neq 0\), the selected coordinate produces strict descent:
$\phi^{(k)}(\beta \mathbf e_{j_k}) < E^{(k-1)}.$
We now incorporate the effect of the inner optimization
subroutine. In the following lemma, we summarize the energy descent bound following the inner optimization subroutine at the $k$-th outer iteration. A proof is provided in Appendix \ref{app:proof:lem:inner}.
  
\begin{lemma}
\label{lem:inner}
Let
$\boldsymbol{\theta}^{(k)}$ be the $k$-length parameter vector returned by the inner optimization subroutine.
Then, the following inequality holds
\begin{equation}
E(\boldsymbol{\theta}^{(k)})
\;\le\;
E^{(k-1)}
-
\rho^2\frac{\mu}{2L}
\tilde{g}_{k,j_k}^2
+
\delta_k, 
\label{eq:approx-descent}
\end{equation}
where $\delta_k\geq0$ such that $\sum_{k}\delta_k < \infty$.
\end{lemma}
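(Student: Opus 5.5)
The argument chains Assumption A4 with Lemma~\ref{lemma:descent}. The link between them is the fact that the single-coordinate trial state from Lemma~\ref{lemma:descent} is itself a member of the extended ansatz family. Over that family, $\bar{E}^{(k)}$ is the infimum.

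\textbf{Realizability.} I first check that the trial state lies in the extended family. At the $k$-th outer iteration the extended ansatz is
\[
|\Psi(\boldsymbol{\theta}^{(k-1)},\beta)\rangle = e^{-\mathrm{i}\beta O_{j_k}}|\Psi^{(k-1)}\rangle.
\]
Choose the extended parameter vector $(\boldsymbol{\theta}^{(k-1)},\beta^\star)$, with
\[
\beta^\star = -\eta\, g_{k,j_k}/[\mathrm{F}_k]_{(j_k,j_k)} \quad\text{and}\quad \eta=\mu/L .
\]
The energy of this state is exactly $\phi^{(k)}(\beta^\star\mathbf{e}_{j_k})$, because $U_{\mathcal{P}}(\beta^\star\mathbf{e}_{j_k}) = e^{-\mathrm{i}\beta^\star O_{j_k}}$. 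All other factors in the product are the identity.

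\textbf{Chaining the bounds.} Since $\bar{E}^{(k)}$ is an infimum over all extended parameters,
\[
\bar{E}^{(k)} \le \phi^{(k)}(\beta^\star\mathbf{e}_{j_k}) \le E^{(k-1)} - \rho^2\frac{\mu}{2L}\tilde g_{k,j_k}^2 ,
\]
where the second inequality is Lemma~\ref{lemma:descent}. Assumption A4 gives $E(\boldsymbol{\theta}^{(k)}) \le \bar{E}^{(k)} + \delta_k$. Combining the two yields \eqref{eq:approx-descent}, and the summability $\sum_k\delta_k<\infty$ is inherited directly from A4.

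\textbf{Main obstacle: the skipped case $j_k = j_{k-1}$.} In this case no new operator is appended, and the ansatz keeps length $k-1$. I would handle it by observing that $e^{-\mathrm{i}\beta O_{j_k}}e^{-\mathrm{i}\theta_{j_{k-1}}O_{j_{k-1}}} = e^{-\mathrm{i}(\beta+\theta_{j_{k-1}})O_{j_k}}$. The trial state is therefore still reachable within the current ansatz by shifting its last parameter. The infimum over the (unextended) family is thus still at most $\phi^{(k)}(\beta^\star\mathbf{e}_{j_k})$, and the same chain of inequalities goes through. For this I interpret $\bar{E}^{(k)}$ in A4 as the infimum over whichever ansatz is optimized at iteration $k$.
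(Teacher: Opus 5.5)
Your proposal is correct and follows the same route as the paper's proof. Both show that the single-coordinate trial state $e^{-\mathrm{i}\beta O_{j_k}}|\Psi^{(k-1)}\rangle$ lies in the extended ansatz family, so $\bar{E}^{(k)} \le \phi^{(k)}(\beta\mathbf{e}_{j_k})$; both then invoke Lemma~\ref{lemma:descent} and add $\delta_k$ via Assumption A4. Two details of yours go slightly beyond the paper: you take $\eta=\mu/L$, which matches Lemma~\ref{lemma:descent}, whereas the paper's proof writes $\mu/2L$, and you treat the skipped case $j_k=j_{k-1}$ via $e^{-\mathrm{i}\beta O_{j_k}}e^{-\mathrm{i}\theta O_{j_{k-1}}}=e^{-\mathrm{i}(\beta+\theta)O_{j_k}}$, which the paper's proof does not address.
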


Next, in the following theorem, we use this lemma to argue that the sequence $\{E^{(k)}\}$ converges to some finite value. As a result, the maximum natural-gradient score $\max_j|\tilde{g}_{k,j}|$ vanishes in the limit $k\rightarrow\infty$. Thus, asymptotically, it establishes that no operator in the pool contributes non-zero to the energy descent, and hence the algorithm converges to a stationary point in the pool. Finally, we strengthen this asymptotic convergence result by using
an assumption (QGI inequality \cite{sohail2025quantum}), stated below, to derive an exponential convergence rate for
$E^{(k)}$, and thus asymptotically converges to the global minimum achieved by any possible ansatz constructed using the operator pool $\mathcal{P}$.

\vspace{5pt}
\noindent\textbf{A5: (QGI inequality for pool information metric)}
There exists a constant $\mu_0>0$ such that, the following inequality holds for all $k$,
\[
\frac{1}{2}\;{\mathbf{g}_k^\top {\mathrm{F}_k}^{-1} \mathbf{g}_k}
\;\ge\;
\mu_0\big(\phi^{(k)}(0) - E^\star\big),
\]
where $E^*$ is the global minimum energy achievable by any ansatz constructed using the operator pool $\mathcal{P}$.
\begin{theorem}\label{thm:asymptotic_conv}
    Let $\{E^{(k)}\}_{k\geq 0}$ be the sequence of energies generated by Geo-ADAPT-VQE
Then, under the assumptions A1-A4, the following statements hold true:
\begin{enumerate}
    \item (Asymptotic Converges) The sequence $\{E^{(k)}\}_{k}$ converges to a finite value $E_{\infty}\geq E^*$, i.e., 
    $$E^{(k)} \longrightarrow E_{\infty}\quad \text{as}\quad k\rightarrow\infty,$$ 
    where $E^*$ is the global minimum energy achievable by any ansatz constructed using $\mathcal{P}$.
\item The pool natural gradient vanishes asymptotically: 
    $$\|\tilde{\mathbf{g}}_k\|
\;\longrightarrow\; 0
\quad\text{as } k\to\infty,$$
The limiting point of the algorithm is a pool-stationary point. 
\end{enumerate}
Moreover, under the outer QGI condition (Assumption~A5), if $\rho\leq \sqrt{\frac{2\lambda M L}{\mu \mu_0}}$, then the energy sequence satisfies the following exponential convergence bound with a vanishing residual:
\[
E^{(k)} - E^\star
\;\le\;
\Big(1 - \frac{\rho^2\mu \mu_0}{2\lambda M L}\Big)^{k}\,(E^{(0)} - E^\star)
+ e_k,
\]
where \(e_k \to 0\) as \(k \to \infty\) and
\(E^{(0)} := \langle\Psi_{\mathrm{HF}}|H|\Psi_{\mathrm{HF}}\rangle.\)
\end{theorem}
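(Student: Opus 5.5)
The plan is to telescope the per-iteration descent inequality of Lemma~\ref{lem:inner}, use the variational lower bound $E^{(k)}\ge E^\star$ to get convergence, and then combine the descent inequality with the QGI condition (Assumption~A5) to obtain a linear recursion for the excess energy.

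\emph{Part 1: convergence.} Write $c:=\rho^2\mu/(2L)$. Lemma~\ref{lem:inner} gives
\[
E^{(k)}\le E^{(k-1)}-c\,\tilde g_{k,j_k}^2+\delta_k .
\]
Every $E^{(k)}$ is the energy of an ansatz built from $\mathcal P$, so $E^{(k)}\ge E^\star$, and $E^\star$ is finite because $\hat H$ is bounded.

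Define $S_k:=E^{(k)}+\sum_{t>k}\delta_t$. This is well defined since $\sum_t\delta_t<\infty$ by A4. The descent inequality shows $S_k\le S_{k-1}$, so $\{S_k\}$ is nonincreasing. It is also bounded below by $E^\star$, hence it converges. The tail sum $\sum_{t>k}\delta_t$ tends to $0$, so $E^{(k)}$ converges to some $E_\infty=\lim S_k\ge E^\star$.

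\emph{Part 2: vanishing natural gradient.} Summing the descent inequality over $k=1,\dots,K$ gives
\[
c\sum_{k=1}^{K}\tilde g_{k,j_k}^2\le E^{(0)}-E^\star+\sum_k\delta_k<\infty .
\]
Hence $\max_j|\tilde g_{k,j}|=|\tilde g_{k,j_k}|\to0$.

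Norms on $\mathbb R^M$ are equivalent, so $\|\tilde{\mathbf g}_k\|^2\le M\,\tilde g_{k,j_k}^2\to0$. By A2 we also have $\|\tilde{\mathbf g}_k\|_{\mathrm F_k}^2=\tilde{\mathbf g}_k^\top\mathrm F_k\tilde{\mathbf g}_k\le\lambda\|\tilde{\mathbf g}_k\|^2\to0$. So every limit point is pool-stationary.

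\emph{Part 3: exponential rate.} I expect this to be the main obstacle, because it requires chaining the metric inequalities in the right direction. The goal is to lower-bound the selected coordinate by the full quadratic form. First,
\[
\tilde g_{k,j_k}^2\ge\tfrac1M\|\tilde{\mathbf g}_k\|^2\ge\tfrac1{M\lambda}\tilde{\mathbf g}_k^\top\mathrm F_k\tilde{\mathbf g}_k=\tfrac1{M\lambda}\,\mathbf g_k^\top\mathrm F_k^{-1}\mathbf g_k .
\]
Assumption~A5, together with $\phi^{(k)}(0)=E^{(k-1)}$, then gives
\[
\tilde g_{k,j_k}^2\ge\tfrac{2\mu_0}{M\lambda}\,(E^{(k-1)}-E^\star).
\]
Substituting this into Lemma~\ref{lem:inner} yields the linear recursion
\[
E^{(k)}-E^\star\le q\,(E^{(k-1)}-E^\star)+\delta_k,\qquad q:=1-\frac{\rho^2\mu\mu_0}{2\lambda ML}.
\]
The hypothesis $\rho\le\sqrt{2\lambda ML/(\mu\mu_0)}$ is what guarantees $q\ge0$, and $\rho>0$ gives $q<1$.

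Unrolling the recursion gives
\[
E^{(k)}-E^\star\le q^k(E^{(0)}-E^\star)+e_k,\qquad e_k:=\sum_{t=1}^k q^{k-t}\delta_t .
\]
It remains to show $e_k\to0$, which follows from a Toeplitz (discrete convolution) argument since $\delta_t\to0$ and $\sum_t\delta_t<\infty$. Concretely, split the sum at $t=\lfloor k/2\rfloor$. The early part is at most $q^{k/2}\sum_t\delta_t$, and the late part is at most $\sum_{t>k/2}\delta_t$. Both terms tend to $0$.
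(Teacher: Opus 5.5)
Your proposal is correct and follows the paper's proof essentially step for step. It uses the same four ingredients: the shifted sequence $E^{(k)}+\sum_{t>k}\delta_t$ for monotone convergence, telescoping Lemma~\ref{lem:inner} to get $\sum_k \tilde g_{k,j_k}^2<\infty$, the chain $\tilde g_{k,j_k}^2\ge\frac{1}{M}\|\tilde{\mathbf g}_k\|_2^2\ge\frac{1}{\lambda M}\mathbf g_k^\top\mathrm{F}_k^{-1}\mathbf g_k$ combined with A5 to obtain the linear recursion, and a tail-splitting argument for $e_k\to 0$. The differences are cosmetic: you split the convolution sum at $k/2$ rather than invoking the paper's perturbed-recursion lemma (which splits at a fixed $\tau$ and, as stated, requires a contraction factor strictly between $0$ and $1$), and you correctly point out that the hypothesis on $\rho$ is exactly what gives $q\ge 0$, which is needed to unroll the recursion.
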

We provide a detailed proof in Appendix \ref{app:proof:thm:asymptotic_conv}.

\begin{figure*}[h]
    \centering
    \includegraphics[scale=0.3]{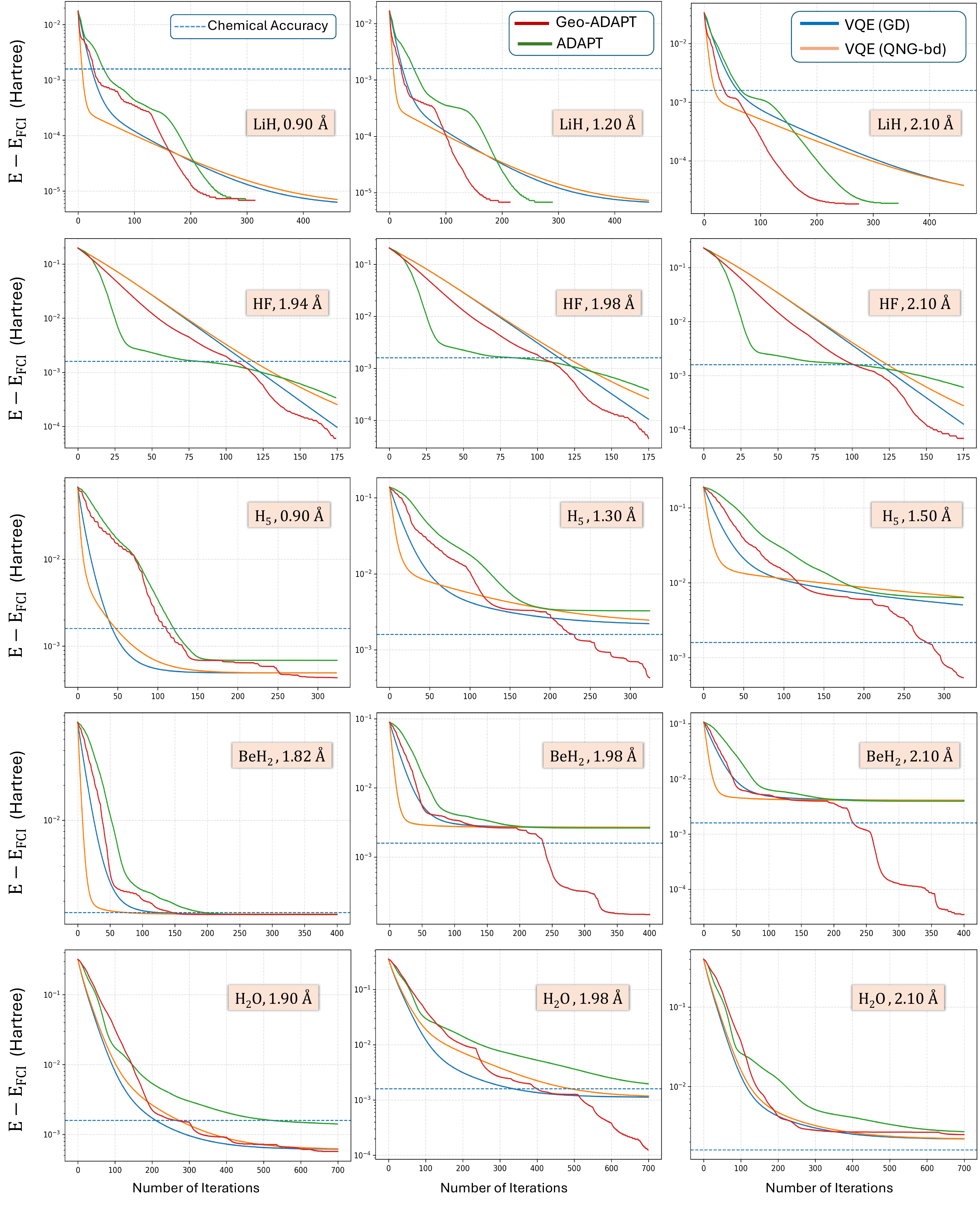}
    \caption{Energy error vs.\ number of iterations comparing GD, QNG-bd, ADAPT-VQE, and Geo-ADAPT-VQE.
    Geo-ADAPT consistently requires fewer steps to achieve chemical accuracy and exhibits improved convergence stability, especially for longer bond lengths.}
    \label{fig:error_vs_steps}
\end{figure*}

\begin{figure*}[h]
    \centering
    \includegraphics[scale=0.3]{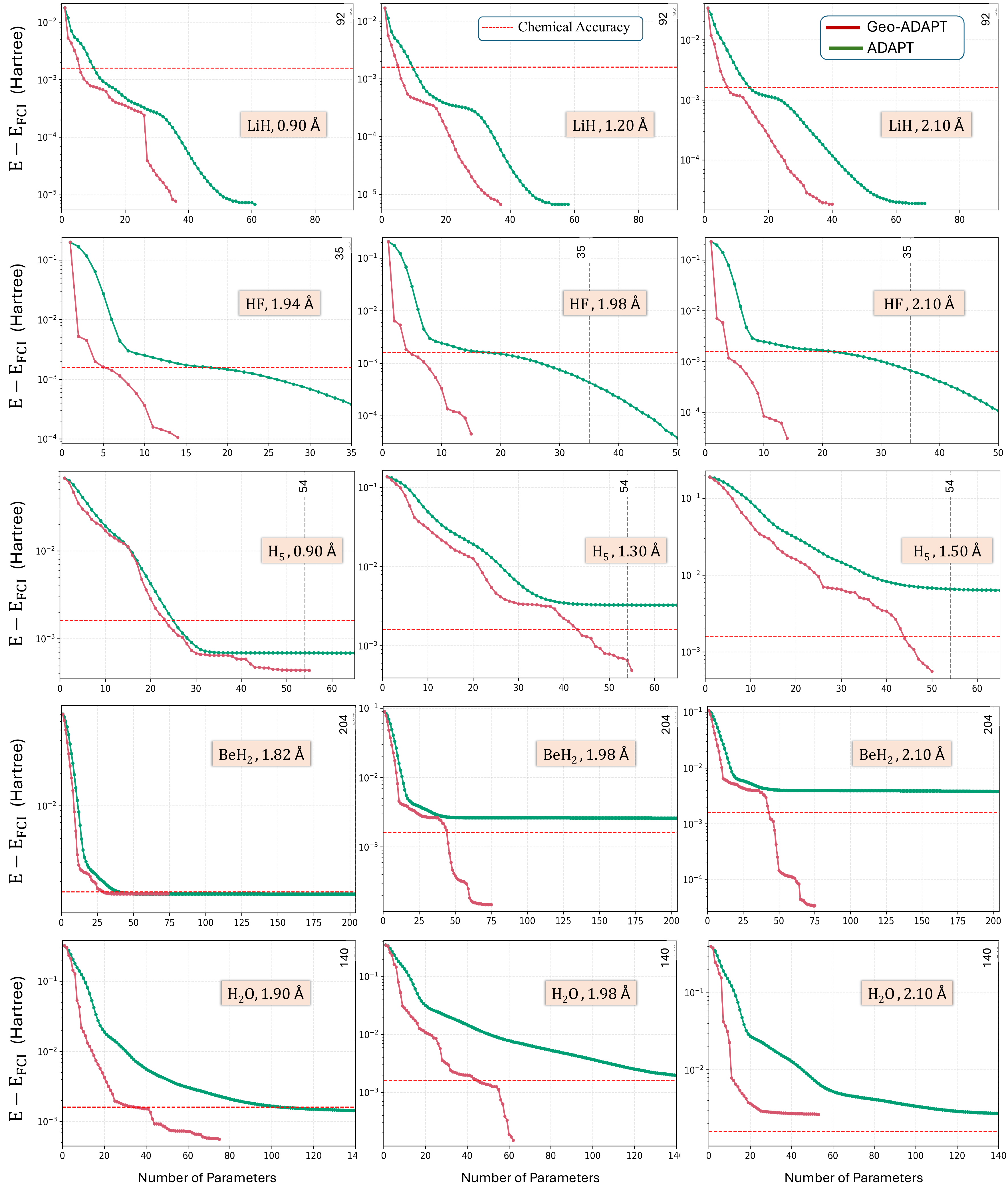}
    \caption{Energy error vs.\ number of ansatz parameters for \ce{LiH}, HF, \ce{H5}, \ce{BeH2}, and \ce{H2O} across several bond lengths.
    Geo-ADAPT reaches chemical accuracy with substantially fewer operators than ADAPT-VQE.}
    \label{fig:error_vs_params}
\end{figure*}

\subsection{Numerical Results}
\label{subsec:numerical}

To demonstrate the utility of Geo-ADAPT, we numerically evaluate its performance on \ce{LiH}, \ce{HF}, \ce{H5}, \ce{BeH2}, and \ce{H2O} across multiple bond lengths, using molecular Hamiltonians from the PennyLane quantum chemistry dataset~\cite{Utkarsh2023Chemistry}. We benchmark Geo-ADAPT against ADAPT-VQE and standard VQE optimized using both GD and QNGD. For ADAPT-VQE, we employ standard gradient descent as the parameter optimizer. For VQE (QNGD), we use a block-diagonal approximation of the Fubini--Study metric (QNG-bd), since computing the full metric for the UCCSD ansatz at each iteration is computationally prohibitive. The operator pool comprises generators of all single- and double-excitation operators, and the VQE baseline employs the UCCSD ansatz. All energies are reported in Hartrees, and learning rates for all methods are selected via grid search over the range $10^{-1}$ to $10^{-4}$. We fix the number of inner optimization steps at $\kappa = 5$ across all experiments, unless otherwise stated. Thus, for example, a maximum of $100$ total iterations corresponds to $20$ outer adaptive steps. For the Geo-ADAPT inner optimization subroutine, we compute the full metric at each iteration because the ansatz grows gradually in size, and the dimensionality of the parameter space remains moderate, making full metric evaluation computationally feasible.

In Fig.~\ref{fig:error_vs_steps}, we report the energy error (relative to the reference FCI energy) as a function of the total number of optimization iterations. For \ce{LiH}, both VQE (GD) and VQE (QNG-bd) initially converge faster than ADAPT-VQE and Geo-ADAPT. However, as the number of iterations increases, their convergence slows down significantly. In contrast, the adaptive algorithms exhibit slower initial progress but ultimately converge more effectively. In particular, Geo-ADAPT converges faster than ADAPT-VQE and achieves the same energy error level of approximately $10^{-5}$ with roughly $20$--$25\%$ fewer iterations. 
For \ce{HF}, ADAPT-VQE converges rapidly at the beginning, reaching close to chemical accuracy ($\sim 3 \times 10^{-3}$) within about $30$ iterations. However, its performance subsequently saturates, requiring approximately $145$ additional iterations to reach an energy error of roughly $3 \times 10^{-4}$. In contrast, Geo-ADAPT progresses more gradually initially but attains a comparable error $(\sim 3 \times 10^{-4})$ with approximately $50$ fewer iterations. In this case, VQE (GD) and VQE (QNG-bd) show similar behavior and exhibit significantly slower convergence overall.

For \ce{H5}, \ce{BeH2}, and \ce{H2O}, the convergence behavior of ADAPT-VQE and VQE is qualitatively similar across bond lengths. For smaller bond lengths, the energy error saturates near chemical accuracy, and for larger bond lengths, it often fails to reach that threshold. In contrast, Geo-ADAPT achieves a comparable initial convergence speed but does not saturate at chemical accuracy. Instead, it continues to reduce the energy error. On average, Geo-ADAPT improves the energy error by up to one order of magnitude. In some cases, such as with \ce{BeH2} at $2.10$~Å, improvements of up to $100$x have been observed, all achieved with the same number of total optimization iterations across all methods.
The observed saturation of fixed-ansatz VQE can be attributed to the increasingly ill-conditioned optimization landscape of high-dimensional parameter spaces. Fixed ansatz, such as UCCSD, often include many excitation operators that contribute only marginally to variational energy reduction. The inclusion of redundant operators increases circuit depth and leads to an over-parameterized optimization landscape, without providing meaningful descent directions. As a result, this does not result in significant improvements in energy reduction \cite{hashimoto2026comprehensive,larocca2023theory,garcia2024effects}. Moreover, such landscapes are known to exhibit barren plateau behavior, characterized by vanishing gradients that significantly slow gradient-based optimization. In addition, approximations such as the block-diagonal quantum natural gradient (QNG-bd) neglect inter-parameter correlations that become important for high-precision refinement, potentially leading to early saturation to suboptimal energy levels.

In ADAPT-VQE, although the ansatz is constructed adaptively, the operator selection rule relies solely on the magnitude of first-order energy gradients. As the ansatz approaches a locally stationary region, gradients across the operator pool can become uniformly small. This results in gradient troughs, which are extended flat regions in the evolution of the energy error \cite{stadelmann2025strategies}. Consequently, the gradient-based selection rule may include operators that provide only marginal improvements, leading to diminishing energy savings and early saturation.
In contrast, Geo-ADAPT selects operators along the natural gradient direction in the space of quantum states, thereby accounting for the intrinsic geometric structure of the state space. This geometry-aware selection mitigates the effects of flat directions, enabling sustained descent and continued energy reduction beyond the saturation level observed in other methods.

In Fig.~\ref{fig:error_vs_params}, we report the energy error as a function of the number of operators in the ansatz, comparing Geo-ADAPT with ADAPT-VQE. The dashed line indicates the total number of excitation operators in the corresponding UCCSD ansatz. 
For \ce{LiH}, Geo-ADAPT achieves an energy error of approximately $10^{-5}$ with up to a $1.5\times$ reduction in the number of operators compared to ADAPT-VQE. For \ce{HF}, Geo-ADAPT requires approximately $3\times$ fewer operators to reach chemical accuracy, and up to $5\times$ fewer operators to achieve an energy error on the order of $10^{-4}$.
For \ce{H5}, \ce{BeH2}, and \ce{H2O}, particularly at larger bond lengths, ADAPT-VQE fails to reach chemical accuracy even when the number of operators approaches that of the number of excitation operators in the full UCCSD ansatz for each molecule. In comparison, Geo-ADAPT achieves chemical accuracy using fewer than $50$ operators, resulting in approximately a $75\%$ and $30\%$ reduction relative to UCCSD for BeH2 and H2O, respectively.

In Table \ref{tab:VQE_vs_GeoADAPT}, we summarize the reduction in \emph{effective ansatz complexity} (EAC) achieved by Geo-ADAPT relative to VQE (GD) and VQE (QNGD). We define the EAC metric as the total variational overhead required to reach chemical accuracy, given by the product of the number of ansatz operators and the number of iterations. For GD and QNGD, the ansatz is fixed to the number of excitation operators in the UCCSD ansatz. The results indicate that Geo-ADAPT consistently achieves chemical accuracy with significantly lower EAC. For instance, Geo-ADAPT provides an average improvement of over $80\%$ compared to VQE (QNG-bd), an $84\%$ improvement over VQE (GD), and up to $82\%$ improvement over ADAPT-VQE.
\begin{table}[h!]
\centering
\begin{tabular}{|l|c|c|c|c|}
\hline
\multirow{2}{*}{\small{\textbf{Molecule}}} 
& \multirow{2}{*}{\footnotesize{\textbf{Qubits}}} 
& \multicolumn{3}{c|}{\footnotesize{\textbf{Gain over}}} \\
\cline{3-5}
&  & \footnotesize{\textbf{GD}} & \footnotesize{\textbf{QNG-bd}} &  \footnotesize{\textbf{ADAPT-VQE}}\\
\hline
\noalign{\vskip 2pt}
\small{\ce{LiH} $(2.1$\AA$)$}     & 12  & $95.31\%$ & $85.18\%$ & $72.65\%$\\
\small{HF $(1.98$\AA$)$}          & 12  & $89.66\%$ & $90.18\%$ & $65.33\%$\\
\small{\ce{BeH2} $(1.82$\AA$)$}   & 14  & $82.73\%$ & $70.98\%$ & $39.35\%$\\
\small{\ce{H2O} $(1.9$\AA$)$}     & 14  & $69.47\%$ & $76.43\%$ & $82.05\%$\\
\hline
\end{tabular}
\caption{Comparison of the \emph{effective ansatz complexity (EAC)} required 
to reach chemical accuracy for each molecule using three optimization schemes: 
GD, QNG-bd, and Geo-ADAPT-VQE. The reported percentages correspond to the 
relative improvement of Geo-ADAPT over GD and QNG-bd, respectively.}
\label{tab:VQE_vs_GeoADAPT}
\end{table}

\begin{figure*}[t]
    \centering
    \includegraphics[scale=0.35]{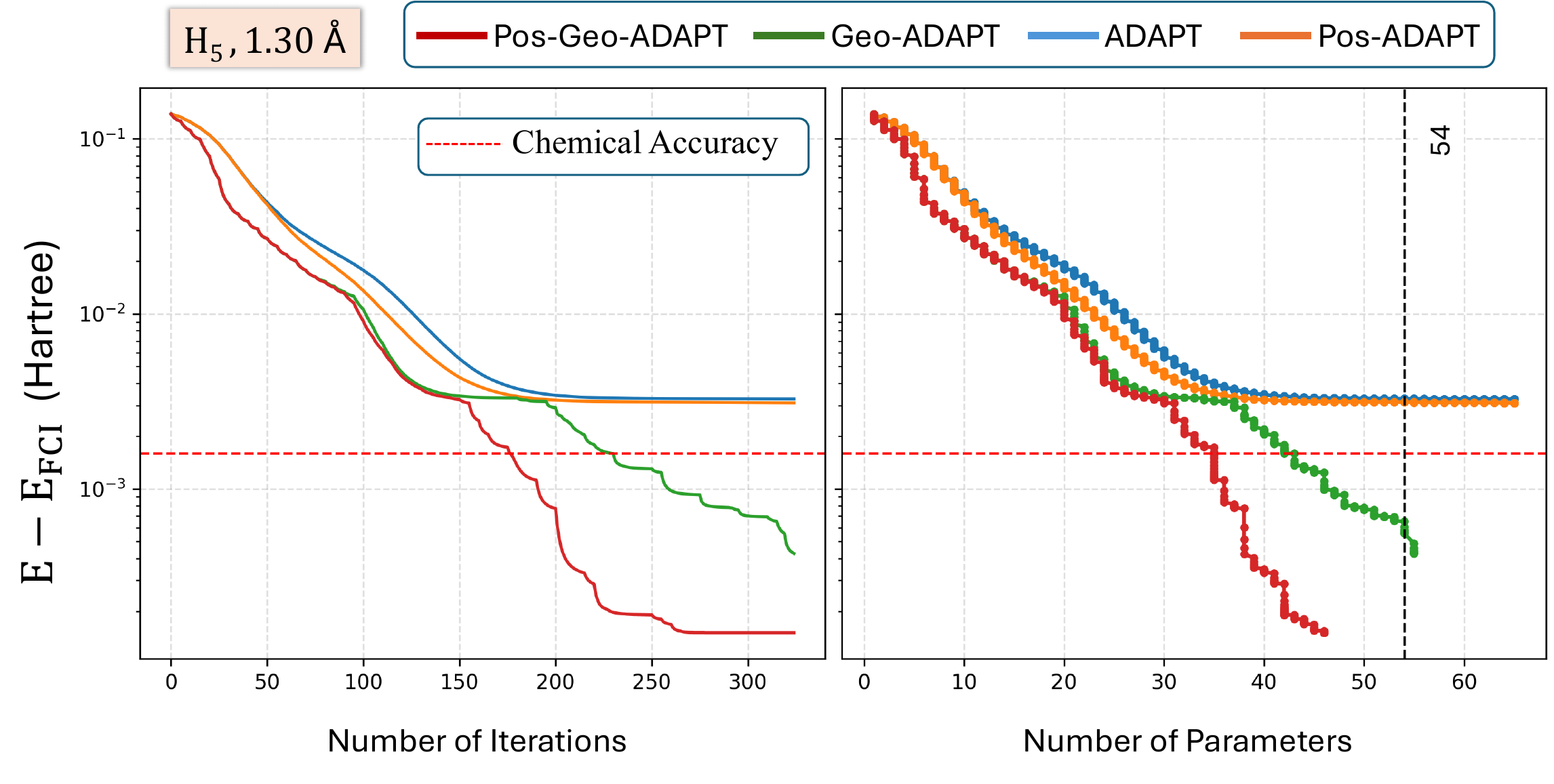}
    \caption{Performance comparison for \ce{H5} at $1.3\,\text{\AA}$. Pos-Geo-ADAPT achieves faster convergence and lower energy error than ADAPT-VQE and Geo-ADAPT, whereas Pos-ADAPT exhibits performance similar to ADAPT-VQE. The vertical dashed line denotes the number of parameters in the UCCSD ansatz.}\label{fig:posGeo}
\end{figure*}
\noindent \textbf{Pos-Geo-ADAPT.}
Building on Geo-ADAPT, we further enhance the ansatz construction by optimizing the placement of operators within the ansatz. In addition to selecting the operator based on the natural gradient magnitude, Pos-Geo-ADAPT optimizes the insertion position of the selected operator within the ansatz. In ADAPT and Geo-ADAPT, the selected operators are appended at the end of the ansatz. However, due to the non-commutative nature of excitation operators, the position of an operator can substantially affect its effective action on the state. Pos-Geo-ADAPT exploits this flexibility by optimizing the insertion position at each step. The detailed procedure is provided in Appendix~\ref{app:posgeo}.

In Fig.~\ref{fig:posGeo}, we report the energy error as a function of both the total number of optimization iterations and the number of ansatz operators, comparing Pos-Geo-ADAPT with Geo-ADAPT. For \ce{H5} at a bond length of $1.3$~\AA, Pos-Geo-ADAPT achieves approximately $3\times$ lower energy error than Geo-ADAPT after $325$ iterations. Moreover, Pos-Geo-ADAPT results in $36.6\%$ reduction in EAC compared to Geo-ADAPT.
We also compare Pos-Geo-ADAPT with the positional counterpart of ADAPT-VQE, known as Pos-ADAPT. Unlike the geometric case, positional refinement does not show significant performance improvements for ADAPT. Both ADAPT and Pos-ADAPT demonstrate similar convergence behavior, reaching saturation before achieving chemical accuracy.

\section{Discussion}
We introduced Geo-ADAPT-VQE, a geometry-aware adaptive variational quantum eigensolver that integrates operator selection with the intrinsic geometry of quantum states. We further proposed Pos-Geo-ADAPT, which refines both operator choice and insertion position to enhance expressive efficiency.
Overall, the numerical experiments indicate that Geo-ADAPT achieves lower energy error with fewer parameters and fewer optimization steps across a wide range of molecular systems. The improvements observed in both energy-versus-iteration and energy-versus-ansatz-size analyses highlight the effectiveness of integrating geometric operator selection with natural-gradient based optimization. In particular, Geo-ADAPT consistently avoids the early saturation behavior observed in gradient-only methods and continues reducing the energy error beyond chemical accuracy.

We further investigated the role of the inner optimization subroutine. In Appendix~\ref{app:innerK}, we provide the effect of the number of inner optimization updates $\kappa$. Our results indicate the trade-off between converges speed and number of ansatz parameters. Smaller values of $\kappa$ provides faster convergence. However, number of ansatz parameters required to achieve energy error increases. Consequently, the computational cost, particularly the evaluation of the information metric becomes more expensive. These observations suggest the selection of an intermediate $\kappa$ in which convergence speed and ansatz size are balanced. Identifying the optimal choice of $\kappa$ that jointly optimizes convergence speed, ansatz size, and computational cost remains an interesting direction for future investigation.

Next, in Appendix \ref{app:ADAPTQNG}, we provide the effect of the inner optimization methods. In particular, we
compared trade-off among four variants: Pos-Geo-ADAPT with QNGD, Pos-Geo-ADAPT with GD,
Pos-ADAPT with QNGD, and Pos-ADAPT with GD. The comparison is performed for
\ce{H5} at $1.3$~\AA\ and \ce{LiH} at $2.1$~\AA. We observe that
when QNGD is used within Pos-ADAPT, the optimization initially follows
a similar descent trajectory as Pos-Geo-ADAPT but subsequently develops
strong oscillations and fails to converge reliably. Moreover, Pos-Geo-ADAPT with GD shows convergence behavior similar to Pos-ADAPT with GD. These results suggest that achieving stable and faster convergence requires aligning the operator selection rule with the parameter update direction according to the same underlying information geometry.

An important question concerns the role of the second-order optimization method, such as the Newton-Raphson method, which utilizes the Hessian of the
loss function. The update step of the Newton method replaces the gradient by multiplying the gradient by the
inverse of the Hessian matrix. However, this does not address saddle points satisfactorily, and instead, saddle
points become attractive under Newton dynamics \cite{dauphin2014identifying}. Since
variational circuits typically exhibit many saddle points,
this limits the practical effectiveness of Hessian-based methods. Another important direction concerns the choice of classical optimization
methods used in the inner optimization subroutine.
Beyond natural-gradient methods, a variety of optimizers have been
explored in the VQE literature, including adaptive gradient-based
methods such as AdaGrad and Adam, as well as gradient-free strategies. While recent
works~\cite{stokes2020quantum,tao2023laws,yamamoto2019natural,wierichs2020avoiding} suggest that natural-gradient approaches
can offer improved stability and convergence speed for VQE, a systematic comparison of these optimizers within adaptive ansatz construction frameworks remains an interesting direction
for future work. Performing such a comparison fairly would require
careful consideration of learning-rate schedules, adaptive step-size
strategies, and metric estimation costs, as well as a deeper theoretical
analysis of convergence behavior.

Beyond comparing inner optimization methods, an important direction is extending Geo-ADAPT to noisy quantum settings. In mixed-state scenarios, the appropriate geometric object is the Bures metric. However, computing the full metric may be sample-intensive. One promising approach is the use of ensemble-based quantum Fisher information metrics~\cite{sohail2025quantum}, which can reduce sample complexity while preserving geometric structure. Moreover, investigating the convergence rate and sample complexity of Geo-ADAPT under realistic noise models remains a topic that requires further exploration.

Beyond quantum chemistry, geometry-aware adaptive circuit construction
could be used in areas such as quantum machine learning and
variational quantum sensing~\cite{khan2025quantum,schuld2020circuit,kaubruegger2023variational,du2022quantum,roth2025autoqml,koike2022autoqml,martyniuk2024quantum}.
In these settings, parameterized quantum circuits are widely used as
trainable models, and their performance depends critically on the
expressivity of the ansatz and the structure of the optimization
landscape. Geometry-aware adaptive strategies could therefore provide a
principled mechanism for constructing expressive yet trainable circuits
tailored to specific learning or sensing objectives. Establishing
theoretical guarantees for such approaches, including convergence
analysis and sample-complexity bounds, remains an important direction for future
research.

\bibliography{references.bib}

\onecolumngrid
\newpage
\appendix
\section*{Appendices}
\section{Effect of Inner QNGD Iterations}
\label{app:innerK}
In Geo-ADAPT-VQE, once a new operator is selected by the geometric
selection rule, the parameters are optimized using
QNGD. This inner optimization is
performed for a fixed number of iterations $\kappa$. The choice of
$\kappa$ directly affects the convergence behavior of the algorithm. Fig.~\ref{fig:error_vs_params} illustrates the effect of varying the
number of inner QNGD iterations $\kappa$ on the convergence of the
algorithm for the $\ce{H5}$ system at bond length $1.30\,\text{\AA}$.
inner quantum natural gradient descent (QNGD) routine. This inner
optimization is performed for a fixed number of iterations $\kappa$.
The choice of $\kappa$ influences the balance between the quality of
the parameter optimization and the overall computational cost.

Fig.~\ref{fig:innerK} shows the effect of varying the number
of inner QNGD iterations $\kappa$ for $\ce{H5}$ at a
bond length of $1.30\,\text{\AA}$. The top panels correspond to
ADAPT-VQE with different values of $\kappa$, while the bottom panels
show the corresponding results for Geo-ADAPT-VQE. In each case, the
energy error is plotted as a function of the number of parameters and
the number of total optimization iterations.
For ADAPT-VQE (top panels), change in $\kappa$ has no effect on
the achieved energy error. The convergence behavior remains largely
unchanged across different values of $\kappa$. However, it
affects the convergence rate. Smaller values of
$\kappa$ tend to reach the plateau region more quickly in terms of the
number of iterations, but this comes at the expense of
introducing more parameters into the ansatz before convergence. From the
figure, moderate values such as $\kappa=2$ or $\kappa=3$ appear to
provide the best balance, achieving convergence with fewer parameters.

In Geo-ADAPT-VQE (bottom panels) shows a stronger dependence
on $\kappa$. Decreasing the number of inner iterations allows to reach
lower energy errors with fewer iterations. 
However, this improvement comes with a trade-off. Smaller values of
$\kappa$ lead to deeper circuits with more parameters
before convergence, which increases both the circuit depth and
the computational cost associated with evaluating the metric
tensor and performing repeated QNGD updates. Therefore, $\kappa$
controls a trade-off between faster convergence and the
overall computational overhead of the algorithm. From the
figure, $\kappa=4$ seems optimal considering energy error, number iterations, and ansatz parameters.
\begin{figure*}[h]
    \centering
    \includegraphics[scale=0.36]{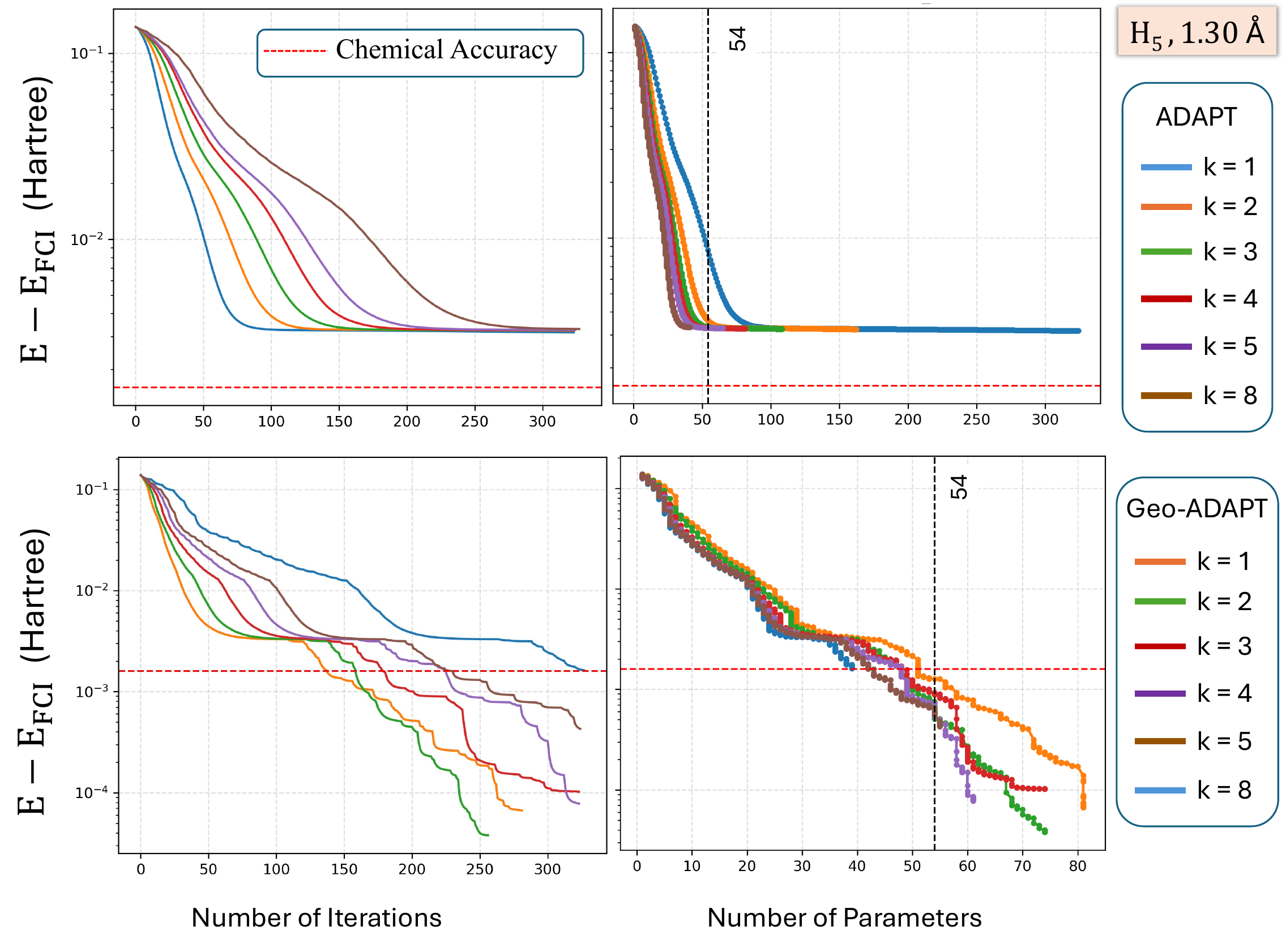}
    \caption{Effect of the number of inner QNGD iterations $\kappa$ on the convergence behavior for the $\ce{H5}$ system at $1.30\,\text{\AA}$. The top panels show ADAPT-VQE and the bottom panels show Geo-ADAPT-VQE. Energy error is plotted as a function of the number of parameters (left) and the number of total optimization iterations (right). Increasing $\kappa$ improves convergence efficiency in Geo-ADAPT-VQE but introduces a trade-off between faster outer-loop convergence and the number of parameters added to the ansatz.}
    \label{fig:innerK}
\end{figure*}

\section{Pos-Geo-ADAPT}
\label{app:posgeo}
Geo-ADAPT-VQE selects, at each outer iteration, the operator from the pool
that gives the steepest energy descent with respect to the information geometry
of the current variational state, and then appends that operator to the end of
the ansatz. However, in a noncommutative variational circuit, the action of a
generator depends not only on which operator is chosen, but also on where it is
inserted in the circuit. With this motivation, we introduce Pos-Geo-ADAPT.
At each outer iteration, the algorithm jointly optimizes over both the operator
index and its insertion position. 

Suppose that after the $(k-1)$-th outer iteration, the ansatz is
\begin{equation}
U^{(k-1)}(\boldsymbol{\theta}^{(k-1)})
=
\prod_{t=1}^{{k-1}} e^{-\mathrm{i}\theta_t^{(k-1)} O_{j_t}},
\end{equation}
The corresponding variational state is
$\ket{\Psi^{(k-1)}}
=
U^{(k-1)}(\boldsymbol{\theta}^{(k-1)})\ket{\Psi_{\mathrm{HF}}}.$
For a candidate insertion position
$p\in\{0,1,\dots,{(k-1)}\}$, define the prefix and suffix unitary by
\begin{equation}
U^{(k-1)}_{\le p}
:=
\prod_{t=1}^{p} e^{-\mathrm{i}\theta_t^{(k-1)} O_{j_t}}
\quad \text{and} \quad
U^{(k-1)}_{>p}
:=
\prod_{t=p+1}^{k-1} e^{-\mathrm{i}\theta_t^{(k-1)} O_{j_t}},
\end{equation}
so that
$U^{(k-1)}(\boldsymbol{\theta}^{(k-1)})
=
U^{(k-1)}_{>p}\,U^{(k-1)}_{\le p}.$
Here, $p=0$ corresponds to inserting before the first gate, while
$p={(k-1)}$ corresponds to appending at the end.
If the pool operator $O_i\in\mathcal P$ is inserted at position $p$, then the
extended ansatz takes the form
\begin{equation}
U_{\mathrm{trial}}^{(k-1)}(\boldsymbol{\beta},p)
=
U^{(k-1)}_{>p}\bigg(\prod_{j=1}^M e^{-\mathrm{i}\beta_j O_j}\bigg)U^{(k-1)}_{\le p},
\end{equation}
and the corresponding trial state is
$\ket{\tilde\Psi^{(k)}(\boldsymbol{\beta},p)}
=
U_{\mathrm{trial}}^{(k-1)}(\boldsymbol{\beta},p)\ket{\Psi_{\mathrm{HF}}}$ (see Fig.\ref{fig:pos-split-wide}) and the energy is given as
\begin{equation}
\phi^{(k)}(\boldsymbol{\beta},p)
=
\bra{\tilde\Psi^{(k)}(\boldsymbol{\beta},p)}
\hat H
\ket{\tilde\Psi^{(k)}(\boldsymbol{\beta},p)}.
\end{equation}
\begin{figure}[h]
\centering
\begin{tikzpicture}[>=stealth,thick]
\node[draw,rounded corners,minimum width=1.9cm,minimum height=0.9cm,align=center] (Upre)
{$U^{(k-1)}_{\le p}$};
\node[draw,rounded corners,minimum width=2.5cm,minimum height=1.0cm,align=center,right=1.4cm of Upre] (V)
{$\prod_{i=1}^{M} e^{-\mathrm{i}\beta_i O_i}$};
\node[draw,rounded corners,minimum width=1.9cm,minimum height=0.9cm,align=center,right=1.4cm of V] (Upost)
{$U^{(k-1)}_{>p}$};
\draw[->] ([xshift=-1.9cm]Upre.west) -- (Upre.west)
node[midway,above] {$\ket{\Psi_{\mathrm{HF}}}$};
\draw[->] (Upre.east) -- (V.west);
\draw[->] (V.east) -- (Upost.west);
\draw[->] (Upost.east) -- ([xshift=2.5cm]Upost.east)
node[midway,above] {$\ket{\tilde\Psi^{(k)}(\boldsymbol{\beta},p)}$};
\draw[dashed] ($(Upre.east)!0.5!(V.west)+(0,0.5)$) -- ($(Upre.east)!0.5!(V.west)+(0,-0.5)$);
\node at ($(Upre.east)!0.5!(V.west)+(0,-0.85)$) {$p$};
\end{tikzpicture}
\caption{Schematic of Pos-Geo-ADAPT at a fixed insertion position $p$.}
\label{fig:pos-split-wide}
\end{figure}
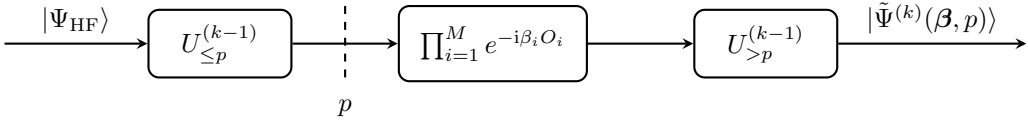

The first-order energy sensitivity for inserting $O_i$ at position $p$ is
captured by
\begin{equation}
g_{k,(i,p)}
:=
\frac{d}{d\boldsymbol{\beta}}\phi^{(k)}(\boldsymbol{\beta},p)\Big|_{\boldsymbol{\beta}=0}
=
-\mathrm{i}\bra{\Psi^{(k-1)}}
[\hat H,\tilde O_{i,p}^{(k)}]
\ket{\Psi^{(k-1)}},
\quad \text{where }
\tilde O_{i,p}^{(k)}
:=
\big(U^{(k-1)}_{>p}\big)^\dagger O_i\,U^{(k-1)}_{>p}.
\end{equation}
For each fixed position $p$, we collect the gradients into the vector
$\mathbf g_{k,p}
=
\big(
g_{k,(1,p)},g_{k,(2,p)},\dots,g_{k,(M,p)}
\big)^{\intercal}.$
To incorporate the geometry of the quantum state space, we define the
position-dependent pool information metric as
\begin{equation}
[\mathrm F_{k,p}]_{(i,j)}
=
\mathrm{Cov}\!\left(
\tilde O_{i,p}^{(k)},
\tilde O_{j,p}^{(k)}
\right)_{\ket{\Psi^{(k-1)}}}.
\end{equation}
The associated natural gradient is then
$\tilde{\mathbf g}_{k,p}
=
\mathrm F_{k,p}^{-1}\mathbf g_{k,p},$
and its $i$-th component $\tilde g_{k,(i,p)}$ quantifies the geometric
descent contribution of inserting operator $O_i$ at position $p$.
The joint operator-position selection rule is therefore
\begin{equation}
(i_k,p_k)
=
\arg\max_{\substack{i\in\{1,\dots,M\}\\ p\in\{0,\dots,(k-1)\}}}
\left|\tilde g_{k,(i,p)}\right|.
\end{equation}
As in Geo-ADAPT-VQE, we terminate when the natural-gradient becomes
sufficiently small. Since the search is now position-dependent, the
stopping criterion is based on $p_k$:
\begin{equation}
\|\tilde{\mathbf g}_{k,p_k}\|_{\mathrm F_{k,p_k}}
<\varepsilon.
\end{equation}
Once the optimal pair $(i_k,p_k)$ is selected, the unitary
$e^{-\mathrm{i}\beta O_{i_k}}$ is inserted at position $p_k$ in the ansatz, and all
variational parameters are re-optimized using QNGD, consistent with the
geometry-aware outer selection rule. The completes the description of Pos-Geo-ADAPT, and summarized in Algorithm ~\ref{alg:pos-geo-adapt-vqe}.
\begin{algorithm}[h]
\caption{Pos-Geo-ADAPT-VQE}
\label{alg:pos-geo-adapt-vqe}
\DontPrintSemicolon
\setstretch{1.25}
\LinesNumbered

\KwIn{Hamiltonian $\hat{H}$, reference state $\ket{\Psi_{\mathrm{HF}}}$, operator pool $\mathcal P=\{O_i\}_{i=1}^{M}$, tolerance $\varepsilon>0$, maximum outer iterations $K$, and maximum inner QNGD iterations $\kappa$}
\KwOut{Final ansatz $\ket{\Psi(\boldsymbol{\theta}^{(K)})}$}

\tcc{Initialization}
Set $\boldsymbol{\theta}^{(0)}\leftarrow \emptyset$, $U^{(0)}\leftarrow I$, and $\ket{\Psi^{(0)}}\leftarrow \ket{\Psi_{\mathrm{HF}}}$\;

\For{$k=1$ \KwTo $K$}{

    \tcc{Compute position-dependent gradients and metrics}
    \For{$p=0$ \KwTo $(k-1)$}{

        \tcc{Define prefix and suffix unitaries}
        $U_{\le p}^{(k-1)} \leftarrow \prod_{t=1}^{p} e^{-i\theta_t^{(k-1)}O_{j_t}}$\;
        $U_{>p}^{(k-1)} \leftarrow \prod_{t=p+1}^{(k-1)} e^{-i\theta_t^{(k-1)}O_{j_t}}$\;

        \For{$i=1$ \KwTo $M$}{
            $\tilde O_{i,p}^{(k)} \leftarrow \big(U_{>p}^{(k-1)}\big)^\dagger O_i\,U_{>p}^{(k-1)}$\;

            \tcc{Position-wise energy gradient}
            $g_{k,(i,p)} \leftarrow -i\bra{\Psi^{(k-1)}}[\hat H,\tilde O_{i,p}^{(k)}]\ket{\Psi^{(k-1)}}$\;
        
\tcc{Position-dependent information metric}
            \For{$j=1$ \KwTo $M$}{
                $[\mathrm F_{k,p}]_{(i,j)}
                \leftarrow
                \mathrm{Cov}\!\big(
                \tilde O_{i,p}^{(k)},
                \tilde O_{j,p}^{(k)}
                \big)_{\ket{\Psi^{(k-1)}}}$\;
            }
        }

        \tcc{Natural gradient at position $p$}
        $\tilde{\mathbf g}_{k,p}\leftarrow \mathrm F_{k,p}^{-1}\mathbf g_{k,p}$\;
    }
    Select
    \(
    (i_k,p_k)
    =
    \arg\max_{\substack{i\in\{1,\dots,M\}\\ p\in\{0,\dots,(k-1)\}}}
    \big|\tilde g_{k,(i,p)}\big|
    \) \;

    \tcc{Stopping criterion}
    \If{$\|\tilde{\mathbf g}_{k,p_k}\|_{\mathrm F_{k,p_k}}<\varepsilon$}{
        \textbf{break}\;
    }

    \tcc{Insert selected operator at position $p_k$}
    \(
    \ket{\Psi(\boldsymbol{\theta}^{(k-1)},\beta)}
    \leftarrow
     U_{>p_k}^{(k-1)}\,e^{-i\beta O_{i_k}}\,U_{\le p_k}^{(k-1)}(\beta)\ket{\Psi_{\mathrm{HF}}}.
    \)
    
\tcc{Inner QNGD optimization}
   Initialize $\boldsymbol{\tilde{\theta}}^{(k,0)} \leftarrow (\boldsymbol{\theta}^{(k-1)}, 0)$\;

    \For{$\ell=0$ \KwTo $\kappa-1$}{
        Compute $\mathsf F(\tilde{\boldsymbol{\theta}}^{(k,\ell)})$ and $\nabla E(\tilde{\boldsymbol{\theta}}^{(k,\ell)})$\;
       $\tilde{\boldsymbol{\theta}}^{(k,\ell+1)}
        \leftarrow
        \tilde{\boldsymbol{\theta}}^{(k,\ell)}
        -
        \eta\,
        \big[\mathsf F(\tilde{\boldsymbol{\theta}}^{(k,\ell)})\big]^{-1}
        \nabla E(\tilde{\boldsymbol{\theta}}^{(k,\ell)})$
    }

    $\boldsymbol{\theta}^{(k)}\leftarrow \tilde{\boldsymbol{\theta}}^{(k,\kappa)}$\;
    Updated ansatz 
    $\ket{\Psi^{(k)}}(\boldsymbol{\theta}^{(k)})$\;
}
\KwRet{$\ket{\Psi(\boldsymbol{\theta}^{(K)})}$}
\end{algorithm}

\section{Effect of the Inner Optimization Method}
\label{app:ADAPTQNG}
We investigate the role of the inner optimization method in the
overall performance of the adaptive algorithms. In particular, we
compare four variants: Pos-Geo-ADAPT with QNGD, Pos-Geo-ADAPT with
GD, Pos-ADAPT with QNGD, and Pos-ADAPT
with GD. Similar to our Pos-Geo-ADAPT, Pos-ADAPT denotes the position-aware variant of ADAPT-VQE, in which the
operator is selected using the standard gradient-based rule but
can be inserted at any position in the current ansatz rather than
being appended only at the end of the circuit. Fig.~\ref{fig:adaptQNG} shows the convergence behavior for
$\ce{H5}$ at bondlength $1.3\,\text{\AA}$ and \ce{LiH} at bondlength $2.1\,\text{\AA}$.

For both molecules, Pos-Geo-ADAPT with QNGD exhibits stable and
consistent convergence, reaching chemical accuracy without
oscillations. Pos-ADAPT with GD also shows stable convergence.
However, its convergence is slower for \ce{LiH} compared to
Pos-Geo-ADAPT, and for \ce{H5} it saturates at a higher energy error
than Pos-Geo-ADAPT. Next, when the geometric operator selection is combined
with GD inner optimization, the convergence behavior becomes similar to that of
Pos-ADAPT with GD, suggesting that geometry-aware operator selection
alone is insufficient if the parameter updates do not follow the same
information geometry in the inner optimization subroutine.

Replacing GD with QNGD in Pos-ADAPT provides performance similar to Geo-ADAPT with QNGD in the beginning, but eventually exhibits strong oscillations
and fails to converge reliably. This behavior is particularly evident
for \ce{LiH}, where the error fluctuates significantly after the initial
descent. These oscillations persist even when the learning rate is
reduced from $\eta=0.1$ to $\eta=0.085$ and $0.05$. Note that for $\ce{H5}$,  decreasing the learning rate to $\eta = 0.05$ does provide improvement in terms of
stability. However, the overall convergence behavior then begins to
resemble that of standard ADAPT-VQE, with slower convergence and higher energy error. 
These observations suggest that the operator selection rule and the
parameter update direction must be aligned with the same underlying
information geometry to achieve stable, efficient, and faster convergence,
which motivates the construction of Geo-ADAPT.
\begin{figure*}[h]
    \centering
    \includegraphics[scale=0.405]{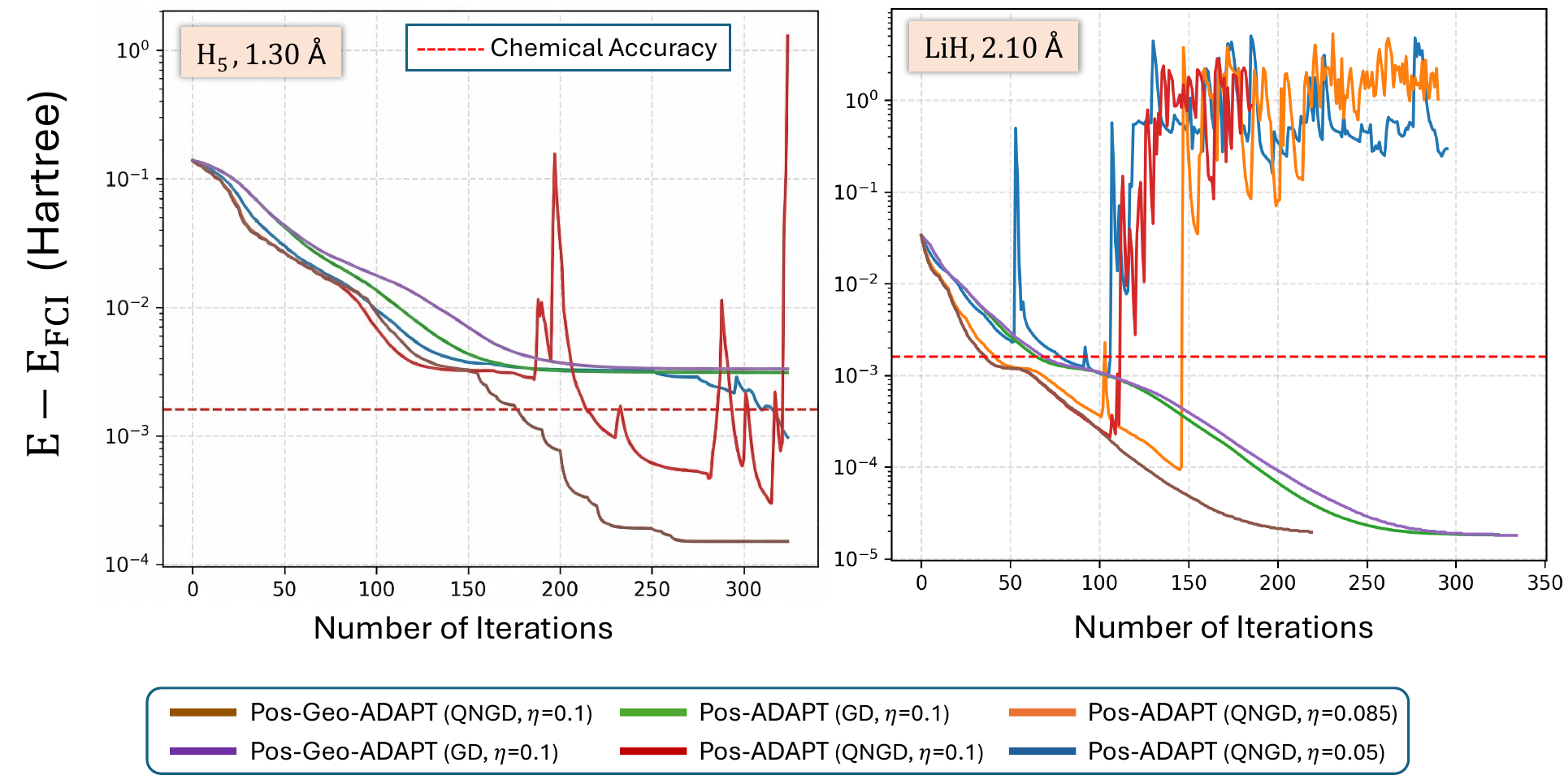}
    \caption{Effect of the inner optimization method on the convergence behavior.
The energy error is plotted versus the number
of total optimization iterations for $\ce{H5}$ at $1.3\,\text{\AA}$ (left)
and \ce{LiH} at $2.1\,\text{\AA}$ (right). We compare Pos-Geo-ADAPT and Pos-ADAPT
using either QNGD or GD in the inner optimization.
A fixed learning rate $\eta=0.1$ is used, while additional runs with smaller
learning rates ($\eta=0.085$ and $0.05$) are also shown for ADAPT-QNGD.}
    \label{fig:adaptQNG}
\end{figure*}
\section{Proof of Lemma \ref{lemma:descent}}
\label{app:proof:lem:descent}

Under Assumption~A1, we first establish a one-step descent bound for the selected coordinate \(j_k\). By the coordinate-wise \(L\)-smoothness of \(\phi^{(k)}\) and the identity \(\phi^{(k)}(\boldsymbol{0}) = E^{(k-1)}\), we have
\begin{align}
\phi^{(k)}(\beta \mathbf{e}_{j_k})
\overset{}{\leq}
E^{(k-1)} + \beta\, g_{k,j_k} + \frac{L}{2}\beta^2 
&\overset{a}{\leq}
E^{(k-1)}
- \eta\,\frac{g_{k,j_k}^2}{[\mathrm{F}_k]_{(j_k,j_k)}}
+\frac{L}{2}\eta^2\left(\frac{g_{k,j_k}}{[\mathrm{F}_k]_{(j_k,j_k)}}\right)^2 \notag\\
&=
E^{(k-1)}
-
\left(
\eta [\mathrm{F}_k]_{(j_k,j_k)}-\frac{L}{2}\eta^2
\right)
\left(
\frac{g_{k,j_k}}{[\mathrm{F}_k]_{(j_k,j_k)}}
\right)^2,
\label{eq:one_step_descent_prelim}
\end{align}
where \((a)\) is obtained by substituting
$\beta=-\eta({g_{k,j_k}}/{[\mathrm{F}_k]_{(j_k,j_k)}}).$
Next, recalling from \eqref{eqn:natGrad} that
$\mathbf g_k = \mathrm{F}_k \tilde{\mathbf g}_k,$
we obtain
\[
g_{k,j_k}
=
\sum_{t=1}^M [\mathrm{F}_k]_{(j_k,t)} \tilde g_{k,t}
=
[\mathrm{F}_k]_{(j_k,j_k)} \tilde g_{k,j_k}
+
\sum_{t\neq j_k}[\mathrm{F}_k]_{(j_k,t)} \tilde g_{k,t} =
[\mathrm{F}_k]_{(j_k,j_k)} \tilde g_{k,j_k}\,\rho_{k,j_k},
\]
where
\begin{equation}
\rho_{k,j_k}
:=
1+
\sum_{t\neq j_k}
\frac{[\mathrm{F}_k]_{(j_k,t)}\,\tilde g_{k,t}}
     {[\mathrm{F}_k]_{(j_k,j_k)}\,\tilde g_{k,j_k}}.
\label{eq:rho_def}
\end{equation}
After substituting this identity into \eqref{eq:one_step_descent_prelim}, we get
\begin{equation}
\phi^{(k)}(\beta \mathbf e_{j_k})
\le
E^{(k-1)}
-
\rho_{k,j_k}^2
\left(
\eta [\mathrm{F}_k]_{(j_k,j_k)}-\frac{L}{2}\eta^2
\right)
\tilde g_{k,j_k}^2.
\label{eq:one_step_descent_rho}
\end{equation}
It remains to lower bound \(\rho_{k,j_k}\). From \eqref{eq:rho_def}, the triangle inequality gives
\begin{align}
\rho_{k,j_k}
&\ge
1-
\sum_{t\neq j_k}
\left|
\frac{[\mathrm{F}_k]_{(j_k,t)}\,\tilde g_{k,t}}
     {[\mathrm{F}_k]_{(j_k,j_k)}\,\tilde g_{k,j_k}}
\right| =
1-
\sum_{t\neq j_k}
\frac{|[\mathrm{F}_k]_{(j_k,t)}|}{[\mathrm{F}_k]_{(j_k,j_k)}}
\left|
\frac{\tilde g_{k,t}}{\tilde g_{k,j_k}}
\right| \overset{b}{\ge}
1-
\sum_{t\neq j_k}
\frac{|[\mathrm{F}_k]_{(j_k,t)}|}{[\mathrm{F}_k]_{(j_k,j_k)}} \overset{c}{\ge}
\rho > 0.
\label{eq:rho_lower_bound}
\end{align}
where \((b)\) follows from the definition of \(j_k\), i.e.,
$|\tilde g_{k,t}| \le |\tilde g_{k,j_k}|$ for all $t\neq j_k \in \{1,2,\cdots,M\}$
and $(c)$ uses the Assumption~A3 (diagonal-dominance):
\[\sum_{t\neq j_k}|[\mathrm{F}_k]_{(j_k,t)}|
\le
(1-\rho)[\mathrm{F}_k]_{(j_k,j_k)}\quad \text{for all}\ k.\]
Finally, combining \eqref{eq:one_step_descent_rho} and \eqref{eq:rho_lower_bound}, we get
\begin{equation}
\phi^{(k)}(\beta \mathbf e_{j_k})
\le
E^{(k-1)}
-
\rho^2
\left(
\eta [\mathrm{F}_k]_{(j_k,j_k)}-\frac{L}{2}\eta^2
\right)
\tilde g_{k,j_k}^2.
\label{eq:one_step_descent_final}
\end{equation}
Next, observe that any learning rate satisfying
$0<\eta<\tfrac{2\mu}{L}$
ensures that
$(\eta [\mathrm{F}_k]_{(j_k,j_k)}-\frac{L}{2}\eta^2) >0
$ for all $k$.
In particular, choosing \(\eta=\mu/L\) in \eqref{eq:one_step_descent_final} gives
\[
\phi^{(k)}(\beta \mathbf e_{j_k})
\le
E^{(k-1)}
-
\rho^2\frac{\mu}{2L}\,\tilde g_{k,j_k}^2.
\]
This establishes the desired descent property and completes the proof of Lemma~\ref{lemma:descent}.
\hfill\(\square\)

\section{Proof of Lemma \ref{lem:inner}}
\label{app:proof:lem:inner}
Recall from Lemma~\ref{lemma:descent} that for 
$j_k = \arg\max_j |\tilde{g}_{k,j}|$, we have
\[
\phi^{(k)}(\beta\mathbf{e}_{j_k})
\;\leq\;
E^{(k-1)}
-
\rho^2\frac{\mu}{2L}\tilde{g}_{k,j_k}^2,
\]
where $\beta = -\eta({g_{k,j_k}}/{[\mathrm{F}_k]_{(j_k,j_k)}})$ and 
$\eta= \mu/2L$. Next, consider the extended ansatz at the $k$-th iteration obtained by 
appending the operator $O_{j_k}$ with parameter $\beta$ to the 
current ansatz $|\Psi^{(k-1)}\rangle$. This construction implies that using the inner optimization subroutine yields the following inequalities 
\[
\bar{E}^{(k)} = \inf_{\boldsymbol{\tilde{\theta}}^{(k)}} E(\boldsymbol{\tilde{\theta}}^{(k)}) 
\leq \inf_{\beta'} \phi^{(k)}(\beta'\mathbf{e}_{j_k})
\leq  \phi^{(k)}(\beta \mathbf{e}_{j_k}).
\]
Therefore, using Lemma~\ref{lemma:descent} yields
\begin{equation}
\bar{E}^{(k)} 
\;\leq\;
E^{(k-1)} - \rho^2\frac{\mu}{2L}\tilde{g}_{k,j_k}^2 .
\label{eq:trial-energy-bound}
\end{equation}
By Assumption~A4, the inner optimization subroutine returns 
$\boldsymbol{\theta}^{(k)}$ satisfying
$E(\boldsymbol{\theta}^{(k)}) 
\;\leq\;
\bar{E}^{(k)} +\delta_k,$
for some $\delta_k \ge 0$ with $\sum_{k=1}^{\infty}\delta_k < \infty$.
Combining this with \eqref{eq:trial-energy-bound} gives
\[
E(\boldsymbol{\theta}^{(k)}) 
\;\leq\;
E^{(k-1)} - \rho^2\frac{\mu}{2L}\tilde{g}_{k,j_k}^2 +\delta_k.
\]
This completes the proof of Lemma~\ref{lem:inner}.\hfill$\square$

\section{Proof of Theorem \ref{thm:asymptotic_conv}}
\label{app:proof:thm:asymptotic_conv}
We begin by establishing the asymptotic convergence of the energy sequence. 
From Lemma~\ref{lem:inner}, we have
\[
E(\boldsymbol{\theta}^{(k)}) 
\;\leq\;
E^{(k-1)} - \rho^2\frac{\mu}{2L}\tilde{g}_{k,j_k}^2 +\delta_k.
\]
Let the tail of the perturbation series be defined as 
$R_k := \sum_{t = k+1}^{\infty} \delta_t.$
Since $\sum_{t=1}^{\infty} \delta_t < \infty$, it follows from \cite[Theorem 3.23]{rudin1976principles} that the sequence $(\delta_t)$ converges to zero. Consequently, applying \cite[Theorem 3.1.9]{bartle2000introduction}, we obtain
$\lim_{k \to \infty} R_k = 0.$ 
Define the auxiliary energy sequence
$\tilde{E}^{(k)} := E^{(k)} + R_k.$
We now show that $\tilde{E}^{(k)}$ is monotonically non-increasing. 
Starting from the inequality above, we obtain
\begin{align*}
    E^{(k)} + R_k 
    &\leq 
    E^{(k-1)} - \rho^2\frac{\mu}{2L}\,\tilde{g}_{k,j_k}^2 
    + (\delta_k + R_k)
    \implies 
    \tilde{E}^{(k)} 
    \leq 
    \tilde{E}^{(k-1)} - \rho^2\frac{\mu}{2L}\,\tilde{g}_{k,j_k}^2 .
\end{align*}
Therefore,
\(\tilde{E}^{(k)} \leq \tilde{E}^{(k-1)},
\) for all $k>0$.
Furthermore, since $E^{(k)} \geq E^*$ and $R_k \geq 0$, we have 
$\tilde{E}^{(k)} \geq E^*$.  
This implies $\{\tilde{E}^{(k)}\}$ is bounded below and monotone non-increasing, therefore, using the monotone convergence theorem \cite[Theorem 3.14]{rudin1976principles},  $\tilde{E}^{(k)}$
converges to a finite limit $E_{\infty} \geq E^*$.  
Next, recall $\tilde{E}^{(k)} = E^{(k)} + R_k$ and $R_k \to 0$, therefore from \cite[Theorem 3.4]{rudin1976principles}, it follows that 
\[
\lim_{k\rightarrow\infty}E^{(k)} = \lim_{k\rightarrow\infty}\tilde{E}^{(k)} - \lim_{k\rightarrow\infty}R_k = E_{\infty}.
\]
Thus, $E^{(k)}$ also converges to the finite limit $E_{\infty} \geq E^*$.
Next, we show that the pool natural gradient vanishes asymptotically.  
Summing the bound from Lemma~\ref{lem:inner} from $k = 1$ to $T$ yields
\[
E^{(T)} 
\leq 
E^{(0)} - \rho^2\frac{\mu}{2L} \sum_{k=1}^T \tilde{g}_{k,j_k}^2
+ \sum_{k=1}^T\delta_k.
\]
Since $\lim_{k\rightarrow\infty}E^{(k)}= E_{\infty}$ and $\sum_k\delta_k < \infty$. Taking
$T \to \infty$ gives
\[
\sum_{k=1}^{\infty} \tilde{g}_{k,j_k}^2
\;\leq\;
\frac{2L}{\rho^2 \mu}\bigg((E^{(0)} - E_\infty) + \sum_{k=1}^{\infty}\delta_k\bigg)
\;<\; \infty.
\]
Therefore, 
$|\tilde{g}_{k,j_k}| \longrightarrow 0.$
By definition of $j_k$, this implies
$\max_j |\tilde{g}_{k,j}| \longrightarrow 0,$
and hence 
$\tilde{\mathbf{g}}_k \longrightarrow \boldsymbol{0}.$
Under Assumption~A3, each $\mathrm{F}_k$ is positive definite, so 
\[
\mathrm{F}_k^{-1}\mathbf{g}_k \to 0 
\;\;\Longleftrightarrow\;\;
\mathbf{g}_k \to \boldsymbol{0}.
\]
Thus, all directional derivatives vanish in the limit, and the algorithm reaches a state in which no operator in the pool can further decrease the energy, i.e., the limit point of the algorithm is a pool-stationary point.

\medskip
\noindent
We now assume the QGI inequality holds, and Assumption~A4 ensures 
$\mathrm{F}_k \preceq \lambda I$.  
Then,
\begin{align*}
    \tilde{g}_{k,j_k}^2 
= \|\tilde{\mathbf{g}}_k\|_{\infty}^2
&\geq \frac{1}{M}\,\|\tilde{\mathbf{g}}_k\|_2^2\geq \frac{1}{\lambda M}\, \|\tilde{\mathbf{g}}_k\|_{\mathrm{F}_k}^2\geq \frac{2\mu_0}{\lambda M}\,(E^{(k-1)} - E^*).
\end{align*}
Since $\tilde{\mathbf{g}}_k \to 0$ and $\mathrm{F}_k \succ 0$, the left-hand side 
must vanish, therefore 
\[
\lim_{k\rightarrow\infty}(E^{(k)} - E^{*}) = 0 \implies \lim_{k\rightarrow\infty}E^{(k)} =E^{*}. 
\]
To obtain the exponential convergence rate, we combine the above inequality with 
Lemma~\ref{lem:inner}, giving
\[
E^{(k)} 
\;\leq\;
E^{(k-1)} 
- \frac{\rho^2\mu \mu_0}{2\lambda M L}\,(E^{(k-1)} - E^{*}) 
+\delta_k.
\]
Finally, subtracting $E^*$ from both sides and applying the above inequality recursively yields
\[
(E^{(k)} - E^{*}) 
\leq 
\left(1 - \frac{\rho^2\mu \mu_0}{2\lambda M L}\right)^{k} (E^{(0)} - E^{*})
+ e_k,\quad 
\text{ where }
e_k := 
\sum_{t=1}^{k} 
\left(1 - \frac{\rho^2\mu \mu_0}{2\lambda M L}\right)^{(k-t)} \delta_t.
\]
If $\rho\leq \sqrt{\frac{2\lambda M L}{\mu \mu_0}}$, then $e_k\rightarrow0
\text{ as } k \to \infty$ (see Lemma~\ref{app:lem:perturbed_linear} in Appendix \ref{app:perturbed_linear}). As a consequence, we observe that $\lim_{k\rightarrow\infty}(E^{(k)} - E^{*}) = 0$. 
This completes the proof of Theorem~\ref{thm:asymptotic_conv}. 
\hfill$\square$

\section{Convergence of a perturbed linear recursion}\label{app:perturbed_linear}
In this section, we state and prove a technical result on the convergence 
of perturbed linear recursions. This auxiliary lemma is used in the proof 
of Theorem~\ref{thm:asymptotic_conv} to handle the summable error terms 
arising from the approximate inner optimization subroutine.
\begin{lemma}\label{app:lem:perturbed_linear}
Let $\{\Delta_k\}_{k\ge 0}$ and $\{\delta_k\}_{k\ge 1}$ be real sequences with
$\Delta_k \ge 0$ and $\delta_k \ge 0$ for all $k$. Suppose that
\begin{equation*}
    \Delta_{k+1} \;\le\; \rho\,\Delta_k + \delta_{k+1},
    \qquad k = 0,1,2,\dots,
    \label{eq:recursion}
\end{equation*}
for some constant $\rho \in (0,1)$ and $\sum_{k=1}^{\infty}\delta_k < \infty.$
Then, for all $k \ge 0$,
\begin{equation}\label{eq:Delta-explicit}
    \Delta_k
\le
\rho^k \Delta_0
+
R_k,
\end{equation}
where $R_k := \sum_{m=1}^{k} \rho^{k-m}\delta_m$. Moreover, $R_k \to 0$ as $k \to \infty$, and therefore $\Delta_k \to 0$ as $k \to \infty$.
\end{lemma}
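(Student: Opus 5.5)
We prove the two claims in order: first the explicit bound \eqref{eq:Delta-explicit} by induction on $k$, then $R_k\to 0$ by a tail-splitting argument. Since the recursion is linear with nonnegative coefficients, the bound follows by unrolling.

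\emph{Step 1: the explicit bound.} For $k=0$ we have $R_0=0$ (empty sum), so the claim reads $\Delta_0\le\Delta_0$. Assume $\Delta_k\le \rho^k\Delta_0+R_k$. Because $\rho>0$, multiplying preserves the inequality, and the recursion gives
\[
\Delta_{k+1}\le \rho\Delta_k+\delta_{k+1}\le \rho^{k+1}\Delta_0+\sum_{m=1}^{k}\rho^{k+1-m}\delta_m+\delta_{k+1}.
\]
The right-hand side equals $\rho^{k+1}\Delta_0+R_{k+1}$, since the last term is the $m=k+1$ summand with weight $\rho^0$. This closes the induction.

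\emph{Step 2: $R_k\to 0$.} This is the only step with content: it is the discrete convolution of a summable sequence with a geometric kernel. Set $S:=\sum_{m\ge1}\delta_m<\infty$ and fix $\epsilon>0$. Choose $N$ with $\sum_{m>N}\delta_m<\epsilon$, which is possible because the tails of a convergent series vanish, as in the proof of Theorem~\ref{thm:asymptotic_conv}. For $k>N$, split $R_k$ at $N$:
\[
R_k=\sum_{m=1}^{N}\rho^{k-m}\delta_m+\sum_{m=N+1}^{k}\rho^{k-m}\delta_m\le \rho^{k-N}S+\sum_{m>N}\delta_m .
\]
In the first sum $\rho^{k-m}\le\rho^{k-N}$; in the second $\rho^{k-m}\le1$. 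The second term is below $\epsilon$, and since $\rho\in(0,1)$ the first term tends to $0$ as $k\to\infty$. Hence $\limsup_k R_k\le\epsilon$ for every $\epsilon>0$, and together with $R_k\ge0$ this gives $R_k\to0$.

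\emph{Step 3: conclusion.} Combining Step 1 with $\rho^k\Delta_0\to0$ and Step 2 gives $0\le\Delta_k\le\rho^k\Delta_0+R_k\to0$, so $\Delta_k\to0$ by the squeeze theorem.
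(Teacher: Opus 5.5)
Your proposal is correct and matches the paper's proof essentially step for step. Like the paper, you prove the unrolled bound by induction, split $R_k$ at a tail index chosen from summability, and conclude $\Delta_k\to 0$ by a squeeze/limsup argument; your uniform head bound $\rho^{k-N}S$ is only a slightly tidier version of the paper's termwise argument that the finite sum $A_k$ vanishes.
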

\noindent\textit{Proof.}
We prove \eqref{eq:Delta-explicit} by induction on $k$.
For $k=0$, the inequality reads
\[
    \Delta_0 \;\le\; \rho^0 \Delta_0 + \sum_{m=1}^{0} \rho^{0-m}\delta_m
    = \Delta_0,
\]
which is true with equality.
Assume now that \eqref{eq:Delta-explicit} holds for some $k\ge 0$, i.e.,
$ \Delta_k
    \;\le\;
    \rho^k \Delta_0
    +
    \sum_{m=1}^{k} \rho^{k-m} \delta_m.$
Using the recursion $\Delta_{k+1}
    \le
    \rho\,\Delta_k + \delta_{k+1}$, we obtain
\begin{align*}
    \Delta_{k+1}
    &\le
    \rho \biggl( \rho^k \Delta_0 + \sum_{m=1}^{k} \rho^{k-m} \delta_m \biggr)
    + \delta_{k+1} =
    \rho^{k+1} \Delta_0
    +
    \sum_{m=1}^{k} \rho^{(k+1)-m} \delta_m
    + \delta_{k+1} =
    \rho^{k+1} \Delta_0
    +
    \sum_{m=1}^{k+1} \rho^{(k+1)-m} \delta_m.
\end{align*}
This is exactly \eqref{eq:Delta-explicit} with $k$ replaced by $k+1$. By induction, 
\eqref{eq:Delta-explicit} holds for all $k\ge 0$.

Next, we want to show that $R_k \to 0$ as $k\to\infty$.
Fix an arbitrary $\epsilon > 0$. Using the summability 
assumption $\sum_m\delta_m <\infty$, there exists an integer $\tau$ (depending on 
$\epsilon$) such that
$ \sum_{m=\tau+1}^{\infty} \delta_m < \epsilon.$
For each $k\ge 1$, split $R_k$ as
$ R_k = A_k + B_k,$
where
\begin{equation*}
    A_k := \sum_{m=1}^{\tau} \rho^{k-m} \delta_m
    \eqand
    B_k := \sum_{m=\tau+1}^{k} \rho^{k-m} \delta_m.
\end{equation*}
First, consider $A_k$. For each fixed $m\in\{1,\dots,\tau\}$, we have
$\rho^{k-m} \to 0$ as $k\to\infty$, since $0<\rho<1$. Therefore, for each fixed 
$m$, the term $\rho^{k-m} \delta_m \to 0$ as $k\to\infty$. As $A_k$ is a 
finite sum over $m=1,\dots,\tau$, we conclude that
\begin{equation*}
    A_k \longrightarrow 0 \text{ as } k \rightarrow \infty.
\end{equation*}
Next, consider $B_k$ for $k\ge \tau$. Since $\rho^{k-m} \le 1$ for all $m \leq k$,
we have
\begin{equation*}
    0 \le B_k
    =
    \!\!\!\sum_{m=\tau+1}^{k} \rho^{k-m} \delta_m
    \le
    \sum_{m=\tau+1}^{k} \delta_m
    \le
    \sum_{m=\tau+1}^{\infty} \delta_m
    < \epsilon.
\end{equation*}
Combining these two observations, we obtain that there 
exists $K \ge \tau$ such that for all $k\ge K$,
$|A_k| < \epsilon
    $ and $
    0 \le B_k < \epsilon.$
Hence, for all $k\ge K$,
\[
    0 \le R_k = A_k + B_k \le |A_k| + B_k < 2\epsilon.
\]
Since $\epsilon>0$ was arbitrary, this implies by the definition 
of the limit that
$R_k \longrightarrow 0 \text{ as } k \rightarrow \infty.$
Finally, we show the convergence of $\Delta_k$.
From \eqref{eq:Delta-explicit} and the definition of $R_k$, we 
have
\begin{equation}
    0 \le \Delta_k \le \rho^k \Delta_0 + R_k.
    \label{eq:Delta-upper-bound}
\end{equation}
By the previous steps,
$ \rho^k \Delta_0 \rightarrow 0
    \eqand 
    R_k \rightarrow 0 \quad \text{as } k\rightarrow \infty.$
We now use the limsup to formalize the convergence. Taking $\limsup$ on both 
sides of \eqref{eq:Delta-upper-bound} and using the subadditivity of the limsup i.e., $\limsup (x_k + y_k) \le \limsup x_k + \limsup y_k$, we obtain
\begin{align*}
    \limsup_{k\to\infty} \Delta_k
    &\le
    \limsup_{k\to\infty} \big( \rho^k \Delta_0 + R_k \big) \le
    \limsup_{k\to\infty} \rho^k \Delta_0
    +
    \limsup_{k\to\infty} R_k = 0 + 0 = 0.
\end{align*}
Since $\Delta_k \ge 0$ for all $k$, we also have
\[
    0 \le \liminf_{k\to\infty} \Delta_k \le \limsup_{k\to\infty} \Delta_k \le 0.
\]
Therefore, from \cite[Theorem 3.4.12]{bartle2000introduction}, we conclude that $\lim_{k\to\infty} \Delta_k = 0,$
which completes the proof. \hfill$\square$

\end{document}